\documentclass[showpacs,
amsmath,
amssymb,
onecolumn,superscriptaddress,notitlepage,preprintnumbers,pra]{revtex4-2}
\usepackage[dvips]{graphicx}
\usepackage{amsmath,amssymb,amsthm,mathrsfs,amsfonts,dsfont}
\usepackage{subfigure, epsfig}
\usepackage{braket}
\usepackage{bm}
\usepackage{enumerate}
\usepackage{color,xcolor}
\usepackage{physics}
\usepackage{float}
\usepackage{graphicx}
\usepackage{hyperref}
\hypersetup{colorlinks=true, linkcolor=blue, citecolor=blue, urlcolor=black }
\usepackage{comment}
\usepackage[normalem]{ulem}
\usepackage{booktabs}
\usepackage{mathtools}
\usepackage{lineno}
\usepackage[version=4]{mhchem}
\usepackage[ruled, vlined, linesnumbered, commentsnumbered, longend]{algorithm2e}
\newcommand{\grads}{\bm{g}}

\usepackage{varioref}
\labelformat{equation}{(#1)}

\labelformat{section}{#1}

\labelformat{figure}{#1}

\labelformat{proposition}{#1}

\labelformat{lemma}{#1}

\labelformat{theorem}{#1}

\labelformat{observation}{#1}

\labelformat{definition}{#1}

\labelformat{problem}{#1}

\labelformat{algorithm}{#1}

\labelformat{corollary}{#1}

\labelformat{statement}{#1}

\newtheorem{theorem}{Theorem}

\begin{document}
 
\title{Enhanced Neural Quantum State via Annealed Gradient Descent}
 
\date{\today}


\author{Shiwei Zhou}
\thanks{These authors contributed equally to this work.}
\affiliation{Institute of High Energy Physics, Chinese Academy of Sciences, Beijing 100049, China}%

\author{Yiming Huang}
\thanks{These authors contributed equally to this work.}
\affiliation{Institute of High Energy Physics, Chinese Academy of Sciences, Beijing 100049, China}
\affiliation{China Center of Advanced Science and Technology, Beijing 100190, China}
\affiliation{Center on Frontiers of Computing Studies, School of Computer Science, Peking University, Beijing 100871, China}

\author{Xiao Yuan}
\email{xiaoyuan@pku.edu.cn}
\affiliation{Center on Frontiers of Computing Studies, School of Computer Science, Peking University, Beijing 100871, China}

\author{Xiaoxia Cai}
\email{xxcai@ihep.ac.cn}
\affiliation{Institute of High Energy Physics, Chinese Academy of Sciences, Beijing 100049, China}

\begin{abstract}

Neural quantum states offer expressive representations of quantum many-body wave functions, yet their practical accuracy can be limited by stochastic optimization rather than representational capacity. Here we identify a finite-sample instability, termed subspace trapping, in which physically important configurations become strongly underestimated, remain absent from successive sampling batches and receive insufficient gradient feedback. 
This self-reinforcing loss of sampled support can confine optimization to an effective subspace and produce apparently stationary states above the true ground state energy. 
To address this problem, we introduce annealed gradient descent (AGD), a sampling-aware update with annealing factor that temporarily increases the relative contribution of sampled low-probability configurations while limiting the dominance of high-probability ones. 
We establish the connection between finite-sample support loss and effective subspace optimization, and then evaluate the method across molecular systems, one and two-dimensional $J_1$-$J_2$ models. Annealed gradient descent suppresses metastable trapping, preserves physically relevant configurations and enables compact neural quantum states to attain chemical accuracy and competitive state-of-the-art performance. 
These results establish AGD as a lightweight complement to expressive neural architectures, improved sampling strategies for scalable quantum many-body optimization.

\end{abstract}

\maketitle

\let\oldaddcontentsline\addcontentsline
\renewcommand{\addcontentsline}[3]{}

\section{Introduction}
Simulating quantum many-body systems is a central challenge in computational physics and chemistry. Exact wave-function representations are limited by the exponential growth of Hilbert space, motivating variational approaches that restrict the search to a parameterized family of states and optimize its free parameters \cite{peruzzo2014variational,carleo2017solving}. The success of such methods therefore depends not only on the expressive capacity of the variational ansatz, but also on whether the optimization procedure can reliably navigate the resulting high-dimensional landscape. Neural quantum states (NQS) have emerged as a powerful variational framework in this setting \cite{carleo2017solving}. By using neural networks to parameterize quantum amplitudes, NQS provide flexible representations that can extend beyond conventional tensor-network descriptions and, in favourable settings, encode strongly correlated states with high-dimensional or volume-law entanglement \cite{deng2017quantum,gao2017efficient}. This flexibility has motivated a broad range of architectures, including restricted Boltzmann machines, autoregressive models, recurrent neural networks and Transformer-based wave functions \cite{carleo2017solving,sharir2020deep,
hibatallah2020recurrent,barrett2022autoregressive,
chung2023nnqs,zhang2023transformer,wang2025solving,shang2025solving,
zeng2026transformer,sun2026fully,
kan2025accelerating,fu2025evaluation}, and has enabled applications to quantum many-body physics and electronic-structure problems \cite{lange2024architectures,medvidovic2024neural,hermann2020deep,pfau2020abinitio,choo2020fermionic}.

However, the practical accuracy of NQS is determined not only by what the network can represent, but also by which regions of configuration space remain visible to the stochastic optimization process. In variational Monte Carlo, observables and gradients are estimated from configurations sampled from the current model distribution. Near critical points and in strongly correlated regimes, Markov-chain sampling can exhibit long autocorrelation times and incomplete mixing, leading to noisy or biased finite-sample estimates. At the same time, the neural parameterization maps the ground-state problem onto a highly non-convex optimization landscape, in which standard first-order methods may converge to metastable states that are numerically stable but remain above the true ground-state energy \cite{lange2024architectures,medvidovic2024neural}. Existing methods address different parts of this computational pipeline. Stochastic reconfiguration and natural-gradient methods improve the geometry of parameter updates by approximating imaginary-time evolution on the variational manifold \cite{sorella1998green,sorella2001generalized}, while scalable formulations such as MinSR, related linear-algebra reformulations, KFAC and SPRING reduce the computational cost of curvature-aware optimization for large neural wave functions \cite{chen2024empowering,rende2024simple,martens2015optimizing,goldshlager2024kaczmarz}. Complementary sampling-oriented approaches use importance sampling, neural importance resampling, sampling without replacement or deterministic configuration selection to reduce estimator variance and improve the representation of peaked or multimodal distributions \cite{misery2026looking,ledinauskas2025neural,malyshev2024neural,li2023nonstochastic}.

Despite these advances, a central question remains unresolved: \textit{why can an expressive NQS converge to an apparently stationary solution without recovering the true ground state, and why does escaping such a solution often require prohibitively large sample sizes?} One possible explanation is the difficulty of learning nontrivial sign or phase structures, particularly in frustrated systems, where optimization may converge to a low-energy state with an incorrect sign pattern \cite{Szabo2020sign}. Rugged optimization landscapes and mode-collapse-like behaviour have also been identified as potential failure mechanisms \cite{Wu2019VAN,Wu2021cluster}. These effects are closely coupled to amplitude optimization, because incorrect phase relations modify the energy gradient and can indirectly suppress the amplitudes of physically important configurations. Several strategies have been developed to mitigate such failures, including temperature or variational annealing \cite{wu2019solving,hibat2021variational}, symmetry restoration and neural cluster updates \cite{wu2021unbiased}, and collective-variable penalties that prevent unphysical concentration of probability mass \cite{zhang2023understanding}. However, these approaches may still recover only a subset of degenerate modes, while neural samplers can remain difficult to train for intrinsically multimodal distributions \cite{inack2022neural,ciarella2023machine}. More fundamentally, existing explanations do not fully clarify how finite-sample optimization dynamics are altered once physically important configurations become strongly suppressed and therefore rarely, if ever, sampled.

In this work, we firstly identify a self-reinforcing finite-sample instability in NQS optimization, which we term subspace trapping. The instability emerges when physically important configurations are assigned probabilities below the effective resolution of the sampling budget. Such configurations may then remain absent from many consecutive batches and contribute no direct statistical weight to either the empirical energy or covariance based gradient. Although its amplitude may still enter local-energy evaluations indirectly through sampled configurations connected to it by the Hamiltonian, the optimizer receives little direct information with which to restore its probability. Because every parameter update reshapes the distribution from which subsequent samples are drawn, configurations that are initially underestimated receive progressively weaker corrective feedback, whereas those that produce an immediate reduction in the estimated energy are preferentially reinforced. This coupling between sampling and optimization can confine the dynamics to an effectively accessible subspace of Hilbert space, yielding apparent convergence at an energy above the true ground-state energy. Increasing the batch size lowers the probability scale at which configurations become unresolved, but does not remove the instability, because the evolving model can subsequently suppress other relevant configurations below the new sampling threshold.
To address this issue, we then introduce annealed gradient descent (AGD), a sampling-aware update rule that applies a temperature-dependent reweighting to configuration-level gradient contributions. Early in the optimization, AGD enhances the relative influence of sampled low-probability configurations while limiting the premature dominance of high-probability ones, thereby maintaining broader support before relevant configurations become statistically inaccessible. The correction is then gradually annealed away, recovering the standard variational gradient for final energy refinement. AGD is conceptually distinct from natural-gradient methods, MinSR and SPRING, because it neither constructs nor approximates a parameter-space metric. It also differs from importance-sampling and resampling strategies, as it leaves the sampling distribution unchanged and instead reweights the information already contained in each batch. Consequently, AGD requires neither architectural modifications nor additional curvature estimates, while preserving the leading-order computational complexity of the underlying first-order optimizer.

We combine AGD with a theoretical analysis of the relationship between finite-sample support loss and effective subspace optimization. We show how the persistent suppression of a small set of configurations can produce an optimization trajectory that is nearly stationary within the accessible sampling subspace while remaining inconsistent with the ground state of the full Hamiltonian. We then validate the method across molecular systems and one and two-dimensional $J_1$-$J_2$ lattice models, covering qualitatively distinct Hamiltonians, correlation regimes and wave-function structures. Across these benchmarks, AGD prevents physically important configurations from becoming persistently undersampled, avoids convergence to metastable solutions such as the Hartree-Fock state, and enables comparatively simple NQS models to attain chemical accuracy in molecular calculations. These results establish finite-sample support loss as a distinct source of failure in self-sampled NQS optimization and show that gradient-level annealing provides a lightweight complement to expressive neural architectures, improved sampling strategies and geometry-aware optimizers.

\section{Results}
\label{sec:Results}
In this section, we first provide a theoretical analysis of the optimization difficulties that arise in NQS training, then introduce AGD as a training strategy to mitigate these issues, and finally evaluate its effectiveness through numerical results on both molecular and many-body systems.

\subsection{The analysis of NQS training issue}
Here, we first recall that given a Hamiltonian $H$, within the framework of variational Monte Carlo (VMC), estimating the energy gradient $\grads = \nabla \langle H\rangle$ with respect to the variational parameters $\bm{\theta}\in\mathbb{R}^{N_p}$ requires a set $\mathcal{S}=\{\sigma^{(k)}\}_{k=1}^{N_S}$ of $N_S$ samples drawn from the probability distribution $p_{\bm{\theta}}(\sigma)=|\psi_{\bm{\theta}}(\sigma)|^2$, where $\psi_{\bm{\theta}}(\sigma)$ denotes the neural network wave function associated with the configuration $\sigma$. In particular, the $g_i$ is given by
\begin{align}
    g_i &= \mathbb{E}_{\sigma \sim p_{\bm{\theta}}}\Big[2\Re\left[\partial_{\theta_i} \log \psi_{\bm{\theta}}^{*}(\sigma)\left(E_{\mathrm{loc}}(\sigma) - E\right)\Big]\right]\\
    &\approx \frac{2}{N_S}\Re\left[\sum_{\sigma\in\mathcal{S}}\partial_{\theta_i} \log
\psi_{\bm{\theta}}^{*}(\sigma)\left(E_{\mathrm{loc}}(\sigma) - E\right)\right],
    \label{eq:vmc_gradient}
\end{align}
where $E_{\mathrm{loc}}(\sigma)=\sum_{\sigma'} H_{\sigma \sigma'}\psi_{\bm{\theta}}(\sigma')/\psi_{\bm{\theta}}(\sigma)$ refers to local energy and $E = \mathbb{E}_{\sigma \in \mathcal{S}}[E_{\mathrm{loc}}(\sigma)]$ denotes the variational energy. Accordingly, under the gradient descent scheme, the parameter update takes the form $\bm{\theta}\leftarrow \bm{\theta} - \eta\grads$, where $\eta$ is the step size.

However, this framework may suffers from a severe sampling bias problem, sometimes referred to as mode collapse. As the system size grows, the Hilbert space expands exponentially, while the NQS ansatz estimates energy gradients $\grads$ from a limited number of configurations $\mathcal{S}$ sampled according to the model distribution $p_{\bm{\theta}}$, which is induced by the neural network wave function $\psi_{\bm{\theta}}(\sigma)$.
This gives rise to a self-consistent sampling dilemma. When the current parameters $\bm{\theta}$ assign insufficient probability to configurations that make essential contributions to the ground state, those configurations are rarely sampled. As a result, their contributions to the gradient are statistically suppressed, preventing the optimization from steering the parameters toward the physically correct region of the Hilbert space.
Such problem is especially pronounced in strongly correlated complex systems, where the ground state typically involves a superposition of many configurations with comparable amplitudes. As the system size increases, the number of relevant configurations can grow rapidly, and their support may become broadly distributed across the Hilbert space, making them difficult to cover with a finite sampling budget. The following theorem formalizes the mechanism underlying this problem and the proof are deferred to Appendix.~\ref{Appendix:proof of thm:Hsubspace}.

\begin{theorem}
    \label{thm:Hsubspace}
    Assuming the state $\sigma_0$ is never sampled, that is over the $\Delta t$ optimization steps and $p_{\bm{\theta}}(\sigma_0) \ll \epsilon$, where $\epsilon=1/\Delta tN_s$ so that $p_{\bm{\theta}}(\sigma_0)$ can be effectively treated as zero throughout these steps. The gradient-based optimization process is mathematically equivalent to finding the exact eigenstate of a subspace Hamiltonian $\hat{H}$, which inevitably leads to the change of $p_{\bm{\theta}}(\sigma_0)$ close to 0, that is $\Delta p_{\bm{\theta}}(\sigma_0)=0$, during the optimization within this subspace.
\end{theorem}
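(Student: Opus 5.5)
The plan is to show that, when $p_{\bm{\theta}}(\sigma_0)$ can be treated as zero, the sampled gradient \eqref{eq:vmc_gradient} coincides with the exact gradient of the energy functional of a projected Hamiltonian. Write $P=\mathbb{1}-|\sigma_0\rangle\langle\sigma_0|$ for the projector onto the configurations that can actually appear in $\mathcal{S}$, and define the subspace Hamiltonian $\hat H = PHP$. The argument then follows from three observations about \eqref{eq:vmc_gradient} under the assumption $p_{\bm{\theta}}(\sigma_0)\ll\epsilon=1/(\Delta t N_S)$.

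\textbf{Step 1: the sampling measure.} The expected number of occurrences of $\sigma_0$ over the window is $\Delta t N_S\, p_{\bm{\theta}}(\sigma_0)\ll 1$, so with high probability $\sigma_0\notin\mathcal{S}$ at every one of the $\Delta t$ steps. Hence every empirical average, and with it $E$, is an average over the distribution $p_{\bm{\theta}}(\sigma)/\sum_{\sigma\neq\sigma_0}p_{\bm{\theta}}(\sigma)$ supported on $P$. Up to a factor $1+O(p_{\bm{\theta}}(\sigma_0))$ this equals the Born distribution of the projected state $P\psi_{\bm{\theta}}$.

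\textbf{Step 2: the local energy.} For $\sigma\neq\sigma_0$,
\begin{equation*}
E_{\mathrm{loc}}(\sigma)=\sum_{\sigma'\neq\sigma_0}H_{\sigma\sigma'}\frac{\psi_{\bm{\theta}}(\sigma')}{\psi_{\bm{\theta}}(\sigma)}+H_{\sigma\sigma_0}\frac{\psi_{\bm{\theta}}(\sigma_0)}{\psi_{\bm{\theta}}(\sigma)} .
\end{equation*}
The first sum is exactly the local energy of $\hat H$ evaluated on $P\psi_{\bm{\theta}}$. The second term is of order $|H_{\sigma\sigma_0}|\sqrt{p_{\bm{\theta}}(\sigma_0)/p_{\bm{\theta}}(\sigma)}$. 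I will drop it, and with it the dependence on $\sigma_0$, by treating $p_{\bm{\theta}}(\sigma_0)\to0$ as the statement instructs. Combining Steps 1 and 2, $g_i=\partial_{\theta_i}\langle P\psi_{\bm{\theta}}|\hat H|P\psi_{\bm{\theta}}\rangle/\langle P\psi_{\bm{\theta}}|P\psi_{\bm{\theta}}\rangle$ up to corrections vanishing with $p_{\bm{\theta}}(\sigma_0)$. Gradient descent is therefore variational minimisation of $\hat H$ on the subspace, and its stationary points are eigenstates of $\hat H$ (ground state if optimisation succeeds) rather than of $H$.

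\textbf{Step 3: $\Delta p_{\bm{\theta}}(\sigma_0)=0$.} To first order,
\begin{equation*}
\Delta p_{\bm{\theta}}(\sigma_0)=-\eta\,\nabla_{\bm{\theta}}p_{\bm{\theta}}(\sigma_0)\cdot\grads=-2\eta\, p_{\bm{\theta}}(\sigma_0)\,\Re\!\left[\nabla_{\bm{\theta}}\log\psi_{\bm{\theta}}(\sigma_0)\right]\cdot\grads .
\end{equation*}
By Step 2, $\grads$ carries no information from $\sigma_0$, so nothing in the update drives $\sigma_0$ specifically. The prefactor $p_{\bm{\theta}}(\sigma_0)\ll\epsilon$ then makes $\Delta p_{\bm{\theta}}(\sigma_0)$ effectively zero over the window. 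The trajectory thus stays in the subspace, which closes the self-consistency: $\sigma_0$ remains unsampled and the assumption holds at the next step.

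\textbf{Main obstacle.} Steps 2 and 3 both assume that $\nabla\log\psi_{\bm{\theta}}(\sigma_0)$ and the cross terms $H_{\sigma\sigma_0}\psi_{\bm{\theta}}(\sigma_0)/\psi_{\bm{\theta}}(\sigma)$ stay bounded. Neither is automatic: the log-derivative can blow up as $\psi_{\bm{\theta}}(\sigma_0)\to0$, and the ratio is only small if the sampled $\sigma$ are not themselves tiny. I would handle this by adding a regularity hypothesis on the log-derivatives and stating the result as an approximate equivalence with $O(\sqrt{p_{\bm{\theta}}(\sigma_0)})$ error, rather than as an exact identity.
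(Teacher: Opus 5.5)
Your Steps 1 and 2 reproduce the paper's reduction: project with $P=\mathbb{I}-|\sigma_0\rangle\langle\sigma_0|$, drop the $H_{\sigma\sigma_0}$ cross term, read the sampled local energy as that of $PHP$, and conclude that stationary points are eigenstates of $PHP$. Your handling of the sampling measure and of the $O(\sqrt{p_{\bm{\theta}}(\sigma_0)})$ error is more careful than the paper's.

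The gap is in Step 3. There you obtain $\Delta p_{\bm{\theta}}(\sigma_0)\approx 0$ purely from the prefactor in $\nabla_{\bm{\theta}}p_{\bm{\theta}}(\sigma_0)=2p_{\bm{\theta}}(\sigma_0)\Re[\nabla_{\bm{\theta}}\log\psi_{\bm{\theta}}(\sigma_0)]$. That bound holds for every low-probability configuration under any update whatsoever. It never uses the eigenstate property you derived in Step 2, although the statement says the vanishing change is a consequence of it.

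What decides whether $\sigma_0$ is trapped is the relative change, $\Delta\log p_{\bm{\theta}}(\sigma_0)\approx-2\eta\,\Re[\nabla_{\bm{\theta}}\log\psi_{\bm{\theta}}(\sigma_0)]\cdot\grads$. The observation that ``$\grads$ carries no information from $\sigma_0$'' does not make this vanish, because the parameters are shared. Away from a subspace stationary point it is $O(\eta)$ per step and it accumulates. Over the window $\log p_{\bm{\theta}}(\sigma_0)$ can therefore drift by $O(\eta\,\Delta t)$, so $p_{\bm{\theta}}(\sigma_0)$ can grow by a factor $e^{O(\eta\Delta t)}$ and cross $\epsilon$. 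Your closing claim, that the trajectory stays in the subspace and $\sigma_0$ remains unsampled at the next step, does not follow from a per-step absolute bound.

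The missing idea is the zero-variance step on which the paper's proof rests. Once the optimization reaches an eigenstate of $PHP$ with eigenvalue $E_0$, every sampled $\sigma\neq\sigma_0$ satisfies $E_{\mathrm{loc}}(\sigma)=\langle\sigma|PHP|\psi_{\bm{\theta}}\rangle/\langle\sigma|\psi_{\bm{\theta}}\rangle=E_0$. Hence $\Delta E(\sigma)=E_{\mathrm{loc}}(\sigma)-E\equiv 0$ over the entire sample. The sampled gradient $\grads$ then vanishes identically, even though $\psi_{\bm{\theta}}$ is not an eigenstate of $H$. Inserting this into the linearized response gives $\Delta\log p_{\bm{\theta}}(\sigma_0)\to 0$; you can use your own formula or the paper's Eq.~\ref{eq:appendix_delta_logp}, which is a sum over sampled $\sigma$ weighted by $\Delta E(\sigma)$.

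That is the statement that makes $p_{\bm{\theta}}(\sigma_0)$ insensitive to further updates and actually closes your self-consistency loop. Your bounded-log-derivative hypothesis on $\nabla_{\bm{\theta}}\log\psi_{\bm{\theta}}(\sigma_0)$ is still needed to make that last step legitimate.
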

The results suggest that, during the optimization process, if a critical configuration $\sigma_0$ becomes incorrectly suppressed to the extent that $p_{\bm{\theta}}(\sigma_0) \ll \epsilon$, its contribution to the local energy and gradient becomes statistically invisible, and the sampling-based approach consequently fails to capture this essential component.
By Theorem~\ref{thm:Hsubspace}, this causes the probability $p_{\bm{\theta}}(\sigma_0)$ to remain nearly fixed and insensitive to subsequent gradient updates, which can no longer reliably restore the missing component. 
Consequently, the optimization no longer targets the true ground state, instead, it converges to the ground state of an effective Hamiltonian projected onto the restricted subspace $\mathcal{H}_S$ that excludes $\sigma_0$, trapping the NQS as illustrated in Fig.~\ref{fig:landscape}. 
This explains how an NQS can satisfy apparent convergence criteria while retaining an energy above the true ground-state energy, and also clarifies and why brute-force increases in sample size can partially help by enlarging the accessible subspace and recovering configurations that were previously invisible.
Within the VMC framework, the optimization only evluate the local energy in Eq.~\ref{eq:local_energy}, rather than as a dynamical generator acting on the full many-body state. 
For each sampled configuration $\sigma$, Hamiltonian $H$ provides the information in $E_{\mathrm{loc}}(\sigma)$, determined by the configurations locally connected to $\sigma$ through its sparse structure. 
As a result, the algorithm has no direct access to the global spectrum of $H$, information about the ground state is inferred only indirectly from pointwise local-energy queries over the configurations visited by the sampler. 
By the zero-variance principle, $E_{\mathrm{loc}}(\sigma)$ becomes constant over $\sigma$ if and only if $\psi_\theta$ is an eigenstate of $H$, so local-energy evaluations serve as an efficient verifier of a candidate eigenstate. 
However, they do not reveal where such a state lies within the high-dimensional variational manifold where ground state lived in. 
NQS optimization therefore can naturally be viewed as a search problem equipped with a local verifier but guided only by sampled queries of the objective. 
This perspective explains how subspace restriction can arise during training, if an important configuration $\sigma_0$ is assigned an anomalously small probability, it is rarely sampled, and the information it carries is correspondingly suppressed in the local-energy estimates.
The optimization can then effectively collapse onto the currently explored subspace, making the missing components of the true ground-state wave function increasingly difficult to recover.

\begin{figure}[t]
  \centering
  \includegraphics[width=.8\linewidth]{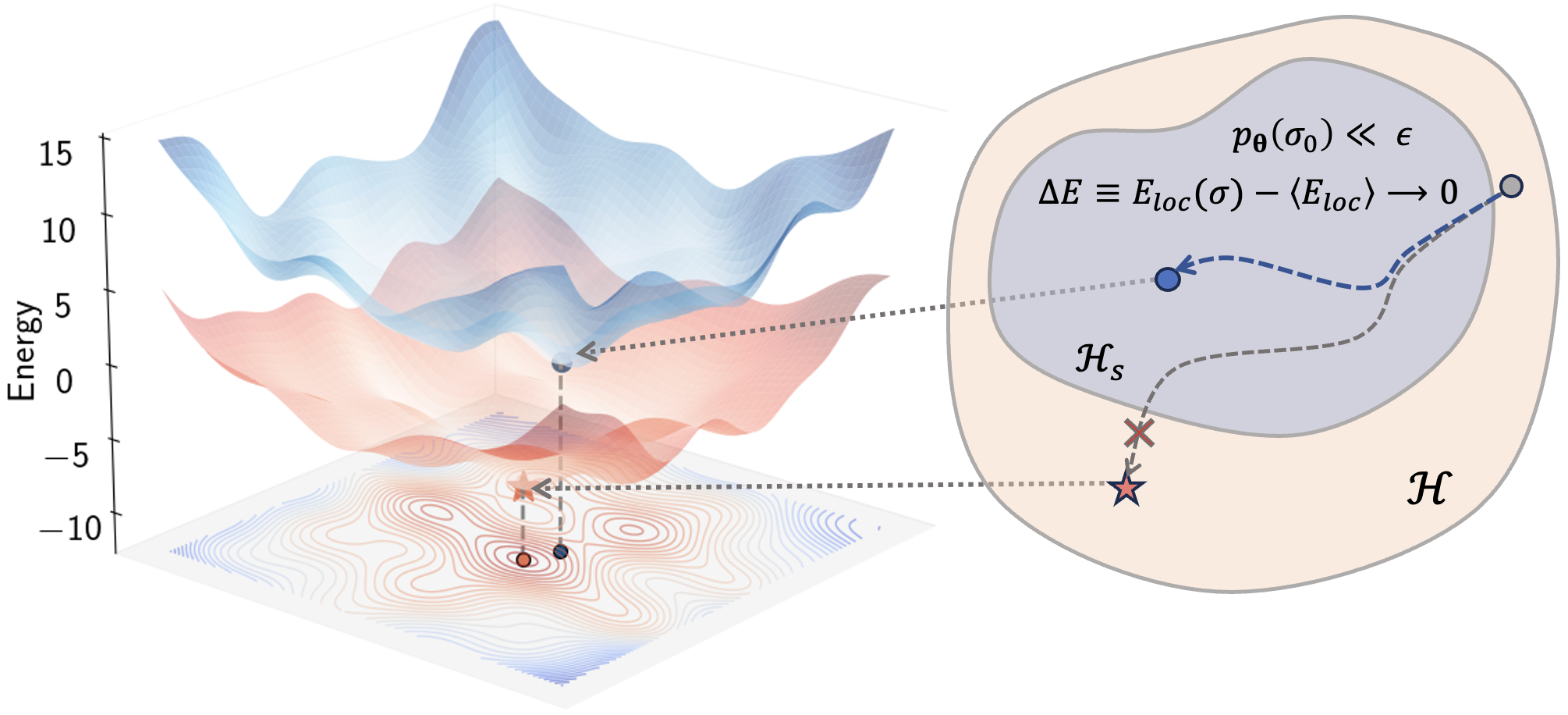}
  \caption{The illustration of subspace trapping in NQS optimization. (Left) Loss landscape in the variational parameter space. (Right) The red and blue regions denote the full Hilbert space $\mathcal{H}$ and a restricted subspace $\mathcal{H}_S$, respectively. The optimization prematurely converges within $\mathcal{H}_S$ (blue curve) when the network underestimates the probability of a critical configuration, $p_{\bm{\theta}}(\sigma_0)\ll \epsilon$, where $\epsilon=1/\Delta t N_s$. Sampling is then restricted to $\mathcal{H}_S$, so the apparent convergence criterion $\Delta E \to 0$ is satisfied even though the energy remains above the true ground-state energy.}
  \label{fig:landscape}
\end{figure}

\begin{figure}[t]
    \centering
        \includegraphics[width=\linewidth]{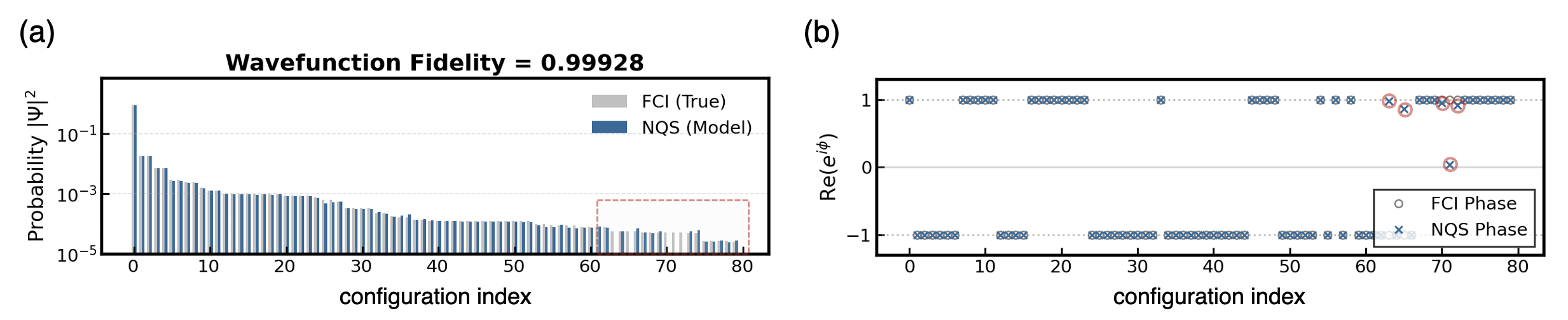}
    \caption{Subspace trapping in NQS optimization.
    Comparison between the exact ground state (Full Configuration Interaction, FCI) and the optimized NQS state in the computational basis, where configurations are sorted by the magnitude of the amplitudes. (a) Probability distribution $|\psi|^2$ of the ground state and the NQS state. The red dashed box highlights the low-weight but non-negligible configurations. (b) Phase comparison between the FCI and NQS states. The red circles mark unsampled or suppressed configurations for which the phase of NQS deviates from the phase of ground state.}
    \label{fig:wf_comparison}
\end{figure}
To access whether the mechanism predicted by Theorem~\ref{thm:Hsubspace} accounts for the local minima observed in NQS optimization, we consider the \ce{N2} molecule as a representative case, with the NQS built by a complex recurrent neural network (cRNN). In Fig.~\ref{fig:wf_comparison}, it illustrates how an NQS can approximate to ground state with high fidelity while still failing to reach chemical accuracy. In Fig.~\ref{fig:wf_comparison}(a), although the NQS reproduces the dominant configurations and achieves a high fidelity, several low-weight but non-negligible configurations are strongly suppressed or missing in the learned wave function. Moreover, as shown in Fig.~\ref{fig:wf_comparison}(b), the suppressed configurations exhibit incorrect phases relative to the exact ground state, which further prevents the NQS from reaching chemical accuracy.
Furthermore, we examined whether the trapped NQS $|\psi_{\bm{\theta}}\rangle$ is optimized within the subspace $\mathcal{H}_{S}$. Given Hamiltonian $H$, let $\sigma_0$ denote the unsampled configuration and define the projected effective Hamiltonian as $\hat{H}=PHP$, where $P=\mathbb{I}-|\sigma_0\rangle\langle\sigma_0|$ is the projector onto $\mathcal{H}_{S}$. We find that the variational energy evaluated with the $H$ is extremely close to the ground-state energy of the projected Hamiltonian $\hat{H}$, with $|\langle H\rangle-\langle \hat{H}\rangle|=3.69\times 10^{-8}\,\,\text{Ha}$. This agreement indicates that the NQS is effectively trapped within $\mathcal{H}_{S}$, and provides direct numerical evidence for subspace trapping and explains how a model can appear converged with high fidelity while still failing to reach chemical accuracy, precisely as predicted by Theorem~\ref{thm:Hsubspace}.

Having established that the optimized NQS is effectively confined to a projected subspace, 
we next ask whether this optimization issue can be overcome by increasing the sampling budget, and why such a remedy becomes insufficient at scale.
Increasing the sample size $N_S$ provides a direct way to mitigate sampling-induced mode-exploration problems, because a larger sample budget improves the chance of observing configurations that are rare under the current model distribution but remain physically important for the ground state. In this sense, increasing $N_S$ can enlarge the effective sampled subspace and help recover update directions that are invisible at smaller sampling resolution. However, this strategy does not directly solve the underlying problem. As the system size grows, the Hilbert-space dimension increases exponentially, and the number of relevant configurations may also grow rapidly. Therefore, the sample size required to reliably cover these configurations can become computationally prohibitive, making brute-force sampling an inefficient and ultimately unscalable remedy.

Advanced optimization methods partially is designed to improve trainability, but they also face a related bottleneck. For example, the minSR algorithm~\cite{Chen2024} improves the scalability of stochastic reconfiguration by exploiting the fact that the sampled quantum geometric tensor has rank at most $N_S$. Instead of solving the natural-gradient equation in the full $N_p$ dimensional parameter space, minSR reformulates the update in the $N_S$ dimensional sample subspace, reducing the per-step cost from $O(N_SN_p^2+N_p^3)$ to $O(N_S^3+N_S^2N_p)$, which makes highly expressive NQS models with large $N_p$ more tractable when $N_S$ remains moderate. Nevertheless, the update directions available to minSR are still restricted to the span of the sampled log-derivatives. If escaping a sampling-induced local minimum requires directions associated with configurations outside the current sample set, then one must increase $N_S$. In precisely this regime, the $O(N_S^3)$ cost becomes the dominant bottleneck, and minSR gradually loses the computational advantage.

Therefore, existing approaches face a fundamental tension between expressivity and sampling coverage. High-accuracy NQS optimization requires a large number of parameters $N_p$ to represent complex many-body states, but it also requires a sufficiently large sample size $N_S$ to cover the relevant configurations and span the critical update directions needed to escape sampling-induced local minima. Conventional SR becomes impractical as $N_p$ grows, while minSR removes the leading $N_p$ bottleneck only by shifting the difficulty to the sampling side. When the required $N_S$ becomes large, its superlinear scaling $O(N_S^3+N_S^2N_p)$ again becomes unaffordable. These limitations indicate that simply increasing the sample budget or relying on sample-subspace optimization is insufficient, and point to the need for a more principled and scalable mechanism that can recover missing physically relevant modes without requiring prohibitively large sampling overhead.

\subsection{AGD-enhanced NQS}
\begin{figure*}[t]
    \centering
        \includegraphics[width=\linewidth]{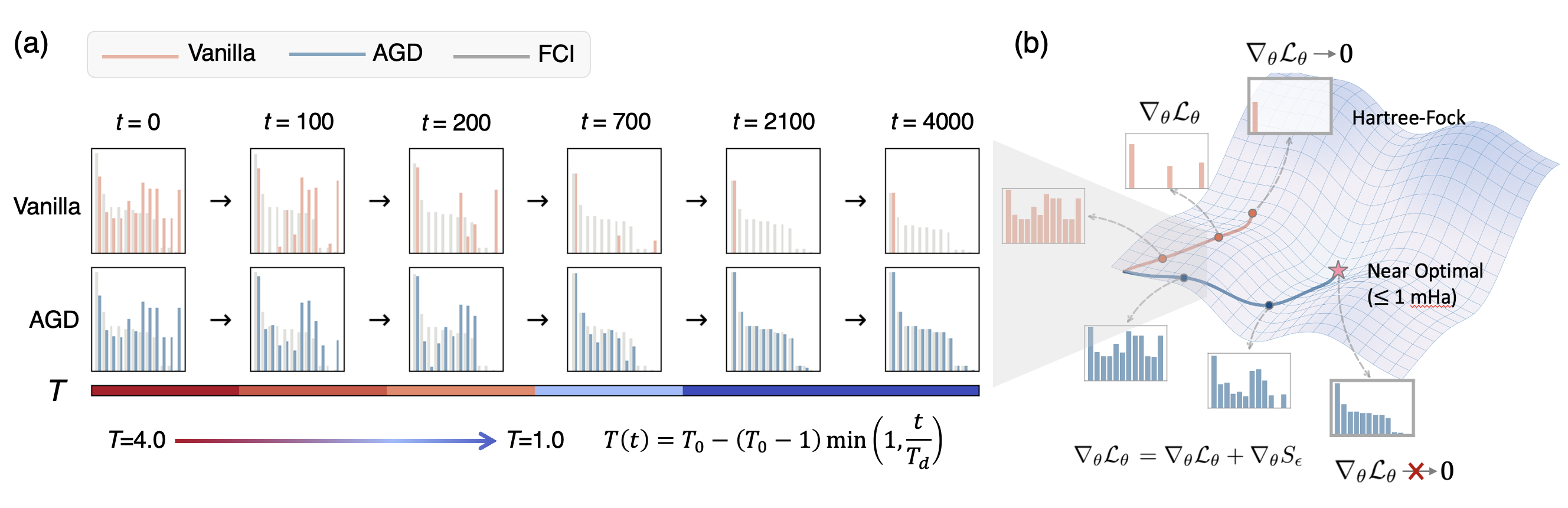}
    \caption{The illustration of an AGD-enhanced NQS for a molecular Hamiltonian as an example, showing that vanilla gradient descent (GD) is prone to be trapped in Hartree-Fock local minima, whereas AGD-enhanced approach enables more effective exploration of the energy landscape. The two colored curves depict the optimization trajectories, the yellow curve corresponds to the NQS optimized with GD, while the blue curve represents the NQS optimized using AGD. The histograms along each trajectory show the corresponding sampled distributions during optimization. Along the GD (red) path, it indicates that GD hard to handle the explore-exploit tradeoff and trends to collapse toward Hartree-Fock like regions with vanishing gradients. In contrast, AGD samples from a tempered distribution controlled by an annealing parameter $T$, which is gradually reduced from $T=4.0$ to $T=1.0$ according to the schedule of $T(t) = T_0 - (T_0 - 1.0) \cdot \min\left(1.0, t/{T_{\mathrm{d}}}\right)$. Along the AGD (blue) trajectory, AGD forces the model to explore and accept more configurations, which leads to a sustained effective descent toward the state with near-optimal energy ($\leq1$ mHa).
}
    \label{fig:framework}
\end{figure*}

To mitigate the computational trade-off between the parameter space dimension $N_p$ and the sample size $N_s$, and to suppress the occurrence of mode collapse, 
we propose an NQS optimization scheme based on annealed enhanced gradient descent (AGD). AGD is motivated by the need to guide the neural network toward a broader set of physically important configurations during the optimization. This prevents the learned probability distribution from prematurely concentrating on a narrow subset of configurations, a behavior that can drive the model toward a spurious ground state. 
By smoothing such probability distributions, AGD enhances mode exploration and helps maintain access to relevant regions of the Hilbert space.
It lies in modifying the optimization dynamics through the introduction of an annealing parameter $T$, which effectively reshapes the gradient landscape during training.
Concretely, the gradients are not evaluated according to the original distribution $p_{\bm{\theta}}$ but over a tempered distribution $q_{\bm{\theta}}$.
\begin{equation}
q_{\bm{\theta}}(\sigma)=\frac{|\psi_{{\bm{\theta}}}(\sigma)|^{\frac{2}{T}}}{\sum_{\sigma'}|\psi_{{\bm{\theta}}}(\sigma')|^{\frac{2}{T}}},
\end{equation}
where $T$ serves as a tuneable temperature parameter during training to encourage exploration of previously undersampled regions. When $T>1$, $q_{\bm{\theta}}$ provides a flattened version of the original distribution, thereby enhancing the visibility of low-probability configurations and weakening the over-representation of dominant configurations in the gradient estimate.
In practice, the annealing parameter $T$ follows a linear decay schedule, starting from an initial temperature $T_{0}$ and decreasing to $T(t)=1.0$ over $T_{d}$ optimization steps,
\begin{equation}
    T(t) = T_0 - (T_0 - 1.0) \cdot \min\left(1.0, \frac{t}{T_{\mathrm{d}}}\right),
\end{equation}
where $t$ denotes the current optimization step. Thus, the annealed probability distribution $q_{\bm{\theta}}$ at $t$-step are computed via
\begin{equation}
q_{\bm{\theta}}(t, \sigma_i) = \frac{\exp\left[\frac{1}{T(t)} \log p_{\bm{\theta}}(\sigma_i)\right]}{\sum_{j=1}^{N_S} \exp\left[\frac{1}{T(t)} \log p_{\bm{\theta}}(\sigma_j)\right]}.
\end{equation}
which effectively implements the $T$-modified distribution $q_{\bm{\theta}}$ through numerically stable log-softmax computation. 
The complete AGD optimization procedure is summarized in Algorithm~\ref{alg:agd} of Appendix~\ref{Appendix:Supplementary Information}.
Furthermore, we now consider the first-order Taylor expansion of new annealed gradient $\hat{\grads}$ over $q_{\bm{\theta}}$ with annealing factor $T=1+\epsilon, \epsilon\rightarrow 0$. It can also be considered as a modified gradient $\hat{\grads}$ written as
\begin{equation}
    \hat{\grads} \leftarrow \grads + \partial _{{\bm{\theta}}}S_{\epsilon}({\bm{\theta}}),
\end{equation}
where $\partial_{{\bm{\theta}}}S_{\epsilon}({\bm{\theta}})$ is the first-order term,
\begin{equation}
\partial_{{\bm{\theta}}}S_{\epsilon}({\bm{\theta}}) = 2\operatorname{Re}\Big[\epsilon \sum_\sigma w(p_{\bm{\theta}}(\sigma))\Delta E(\sigma)\partial_{\bm{\theta}}\psi^*_{\bm{\theta}}(\sigma)\Big],
    \label{equ:analogy_entropy}
\end{equation}
where $w(p_{\bm{\theta}})= -p_{\bm{\theta}}\log p_{\bm{\theta}}$ and $\partial_{{\bm{\theta}}}S_{\epsilon}({\bm{\theta}})$ serves to mitigate mode collapse.  
This property arises because, in the regime where probabilities are small, the gradient update from $\partial_{{\bm{\theta}}}S_{\epsilon}({\bm{\theta}})$ imparts a logarithmic enhancement (scaling with $\log p$) to these components. When considering the $n$-th order approximation, this yields an enhancement scaling as $(\log p)^n$ relative to the original gradient $\partial_\theta \mathcal{L}(\theta)$. A more detailed analysis is provided in Appendix~\ref{sec:gradient}.

Fig.~\ref{fig:framework}(a) shows AGD-enhanced NQS for a molecular Hamiltonian. Under standard gradient descent, the vanilla NQS collapses toward the Hartree-Fock region and converges prematurely. In contrast, AGD first encourages exploration over a broader set of configurations and then anneals toward accurate energy minimization, enabling the variational search to escape subspace trapping and approach the true ground-state structure.

Moreover, beyond its motivation, AGD offers several practical advantages: it requires no modification to the neural-network architecture, introduces negligible computational overhead, and maintains a per-step complexity essentially equivalent to that of standard gradient descent. Moreover, since AGD operates directly at the sampling and gradient-estimation level, it integrates naturally with large-scale autoregressive sampling, which can further enhance sampling coverage and variational accuracy. As the temperature is gradually annealed, the optimization transitions smoothly from broad configuration-space exploration to fine-grained energy minimization, progressively steering the network toward the true ground state.

\subsection{Molecular systems experiments}
\begin{figure}[t]
  \centering
  \includegraphics[width=1\linewidth]{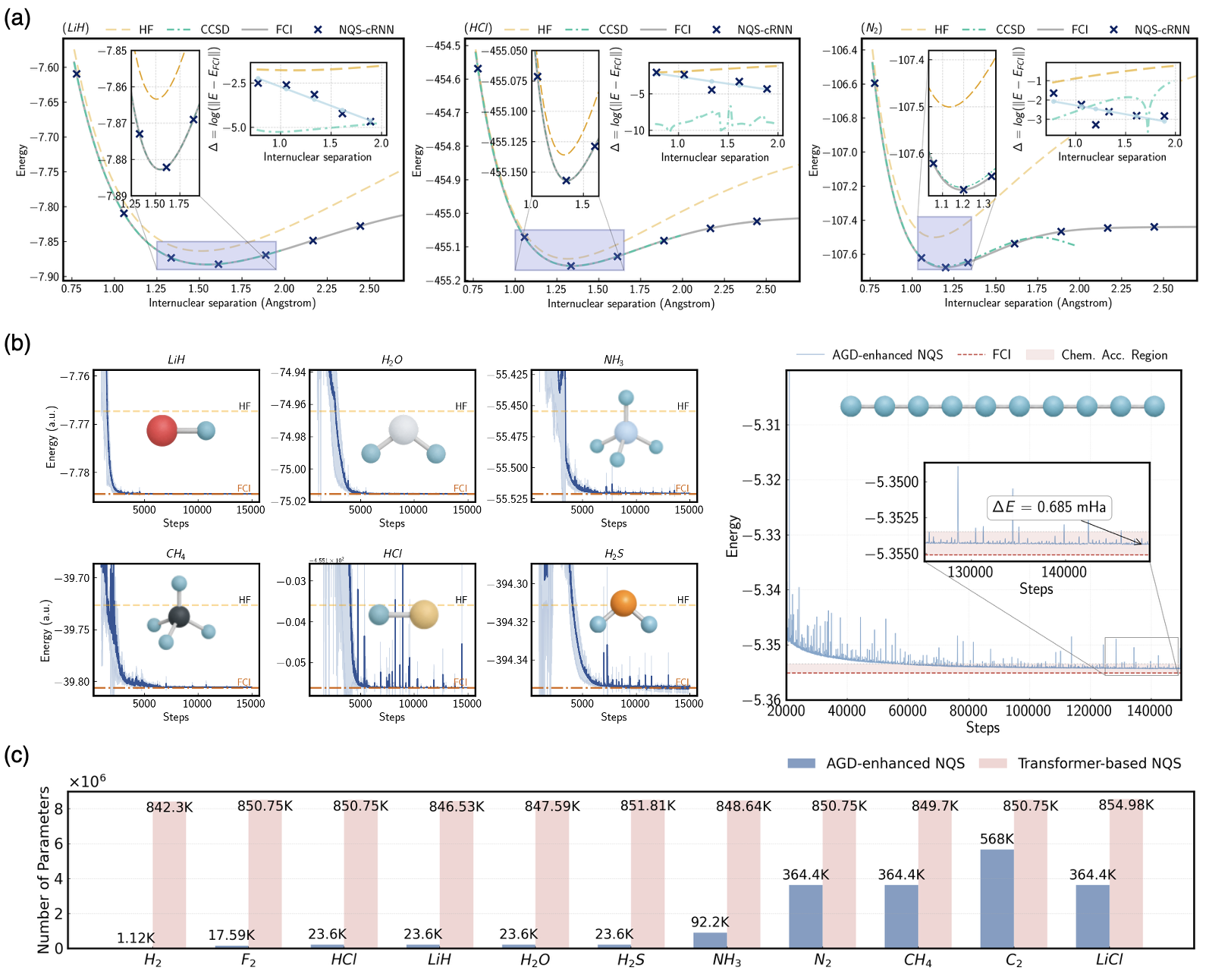}
  \caption{Performance of AGD-enhanced NQS on molecular systems.
  (a) Potential energy surfaces for \ce{LiH}, \ce{HCl}, and \ce{N2} as a function of internuclear separation. The AGD-enhanced cRNN-NQS results are compared with Hartree-Fock (HF), coupled-cluster singles and doubles (CCSD), and full configuration interaction (FCI) references. The zoomed insets highlight the equilibrium region, while the error insets show $\Delta=\log(|E-E_{\mathrm{FCI}}|)$, demonstrating that the AGD-enhanced NQS closely follows the FCI energy across different molecular geometries.
  (b) Training dynamics of AGD-enhanced NQS for molecular ground-state search. The left panels show the average convergence curves for \ce{LiH}, \ce{H2O}, \ce{NH3}, \ce{CH4}, \ce{HCl} and \ce{H2S}, with shaded area presents the spread of results across multiple runs. The right panel shows the optimization trajectory for the larger \ce{H10} chain, with the shaded region denoting chemical accuracy. The inset magnifies the late-stage convergence and reports a final energy error of $\Delta E=0.685$ mHa.
  (c) Comparison of the number of variational parameters used by AGD-enhanced NQS and a transformer-based NQS across different molecular systems. AGD enables compact cRNN-based models to achieve chemical accuracy with substantially fewer parameters. Parameter counts for the transformer-based NQS are taken from Ref.~\cite{shang2025solving}.
}
  \label{fig:mol_results}
\end{figure}

In this section, we benchmark AGD-enhanced NQS, parametrized by a cRNN ansatz, for molecular ground-state energy estimation. Molecular systems provide a stringent benchmark since NQS optimization can easily be trapped near Hartree-Fock like local minima, resulting in premature convergence above chemical accuracy. We therefore evaluate whether AGD helps the NQS escape these well-known local minima and compare its performance with standard quantum chemistry references, including Hartree-Fock (HF), coupled-cluster singles and doubles (CCSD)\cite{Bartlett2007coupled}, and full configuration interaction (FCI) energies.

We first evaluate the proposed method on potential energy surface calculations for \ce{LiH}, \ce{HCl} and \ce{N2} as shown in Fig.~\ref{fig:mol_results}(a).
Across the scanned internuclear separations, the optimized energies of NQS-cRNN reach chemical accuracy across the entire range of bond lengths considered, including the equilibrium region highlighted by the shaded boxes. In contrast, the Hartree-Fock curve shows a clear systematic deviation, especially away from the equilibrium geometry, while CCSD improves the accuracy but still leaves visible residual errors in several bond-length regimes. The logarithmic error insets further show that the AGD-enhanced NQS maintains a small deviation from FCI over a broad range of geometries, indicating that the learned wave function is not restricted to a single equilibrium configuration but remains accurate along the molecular dissociation coordinate.

In Fig.~\ref{fig:mol_results}(b), we next examine the optimization dynamics for a set of representative molecules, including LiH, \ce{H2O}, \ce{NH3}, \ce{CH4}, \ce{HCl} and \ce{H2S}. For all tested systems, the AGD-enhanced NQS rapidly decreases the variational energy from the Hartree-Fock level toward the FCI reference and enters the chemical-accuracy region within a moderate number of training steps. The convergence curves show that AGD is able to stabilize the training process even for molecules with different bonding structures and numbers of electrons. Although transient fluctuations appear during optimization, especially for the more challenging molecules, the final energies remain close to the FCI baseline, which suggests that the AGD strategy improves the exploration of the variational landscape while still allowing accurate energy minimization at later stages. The complementary results are shown in Tab.~\ref{tab:nqs_crnn_molecules}.
The right panel of Fig.~\ref{fig:mol_results}(b) further tests the method on the larger \ce{H10} chain, which presents a more demanding optimization problem due to the enlarged Hilbert space and stronger correlation effects. After a long training trajectory, the energy converges into the shaded chemical-accuracy window around the FCI value. The inset shows that the final energy error reaches $\Delta E=0.685$ mHa, below the standard chemical-accuracy threshold. This result demonstrates that the AGD-enhanced NQS can remain effective beyond small molecular benchmarks and can achieve chemically accurate energies in a more extended molecular system.

Finally, we compares the number of trainable parameters required by the AGD-enhanced NQS and a vanilla transformer-based NQS across different molecular Hamiltonians shown in Fig.~\ref{fig:mol_results}(c). The vanilla transformer based model utilizes large parameter budget, whereas the AGD-enhanced NQS requires substantially fewer parameters for all tested molecules. The reduction is especially pronounced for small and medium-sized systems such as \ce{H2}, \ce{F2}, \ce{HCl}, \ce{LiH}, \ce{H2O}, \ce{H2S} and \ce{NH3}, where the AGD-enhanced model achieves accurate results with orders-of-magnitude fewer parameters. Even for more complex systems such as \ce{N2}, \ce{CH4}, \ce{C2} and \ce{LiCl}, the parameter count remains below that of the transformer-based NQS. Together, these results indicate that AGD improves both the optimization efficiency and parameter efficiency of neural quantum-state calculations, enabling chemically accurate molecular ground-state estimation with a compact variational ansatz.

\begin{table*}[t]
\centering
\renewcommand{\arraystretch}{1.18}
\setlength{\tabcolsep}{10pt}
\setlength{\arrayrulewidth}{1pt}
\begin{tabular*}{0.9\textwidth}{@{\extracolsep{\fill}}lccccccc}
\hline
Molecule 
& $N_{\mathrm{o}}$ 
& $N_{\mathrm{e}}$ 
& HF Energy 
& CCSD 
& FCI 
& \multicolumn{2}{c}{NQS-cRNN} \\
\cline{7-8}
& & & & & & Ours & Settings \\
\hline
\ce{H2} & 4 & 2 & -0.9109 & -0.9981 & \underline{-0.9981} & \bf{-0.9981} & \texttt{Hiddens=10, $T_0$=4.0} \\
\ce{F2} & 20 & 18 & -195.6380 & \bf{-195.6611} & \underline{-195.6611} & -195.6610 & \texttt{Hiddens=43, $T_0$=4.0} \\
\ce{HCl} & 20 & 18 & -455.1360 & -455.1562 & \underline{-455.1562} & \bf{-455.1562} & \texttt{Hiddens=50, $T_0$=4.0} \\
\ce{LiH} & 12 & 4 & -7.7674 & -7.7845 & \underline{-7.7845} & \bf{-7.7845} & \texttt{Hiddens=50, $T_0$=4.0} \\
\ce{H2O} & 14 & 10 & -74.9644 & -75.0154 & \underline{-75.0155} & \bf{-75.0155} & \texttt{Hiddens=50, $T_0$=4.0} \\
\ce{H2S} & 22 & 18 & -394.3114 & \bf{-394.3546} & \underline{-394.3546} & -394.3541 & \texttt{Hiddens=50, $T_0$=4.0} \\
\ce{NH3} & 16 & 10 & -55.4548 & \bf{-55.5209} & \underline{-55.5211} & -55.5206 & \texttt{Hiddens=100, $T_0$=4.0} \\
\ce{N2} & 20 & 14 & -107.4990 & -107.6560 & \underline{-107.6602} & \bf{-107.6596} & \texttt{Hiddens=200, $T_0$=4.0} \\
\ce{CH4} & 18 & 10 & -39.7266 & -39.8060 & \underline{-39.8063} & \bf{-39.8062} & \texttt{Hiddens=200, $T_0$=4.0} \\
\ce{C2} & 20 & 12 & -74.4209 & -74.6745 & \underline{-74.6908} & \bf{-74.6901} & \texttt{Hiddens=250, $T_0$=4.0} \\
\ce{LiCl} & 28 & 20 & -460.8273 & -460.8476 & \underline{-460.8496} & \bf{-460.8491} & \texttt{Hiddens=200, $T_0$=5.0} \\
\hline
\end{tabular*}
\caption{Comparison of molecular ground-state energies obtained by HF, CCSD, FCI and NQS-cRNN.}
\label{tab:nqs_crnn_molecules}
\end{table*}

\begin{figure}[t]
  \centering
  \includegraphics[width=\linewidth]{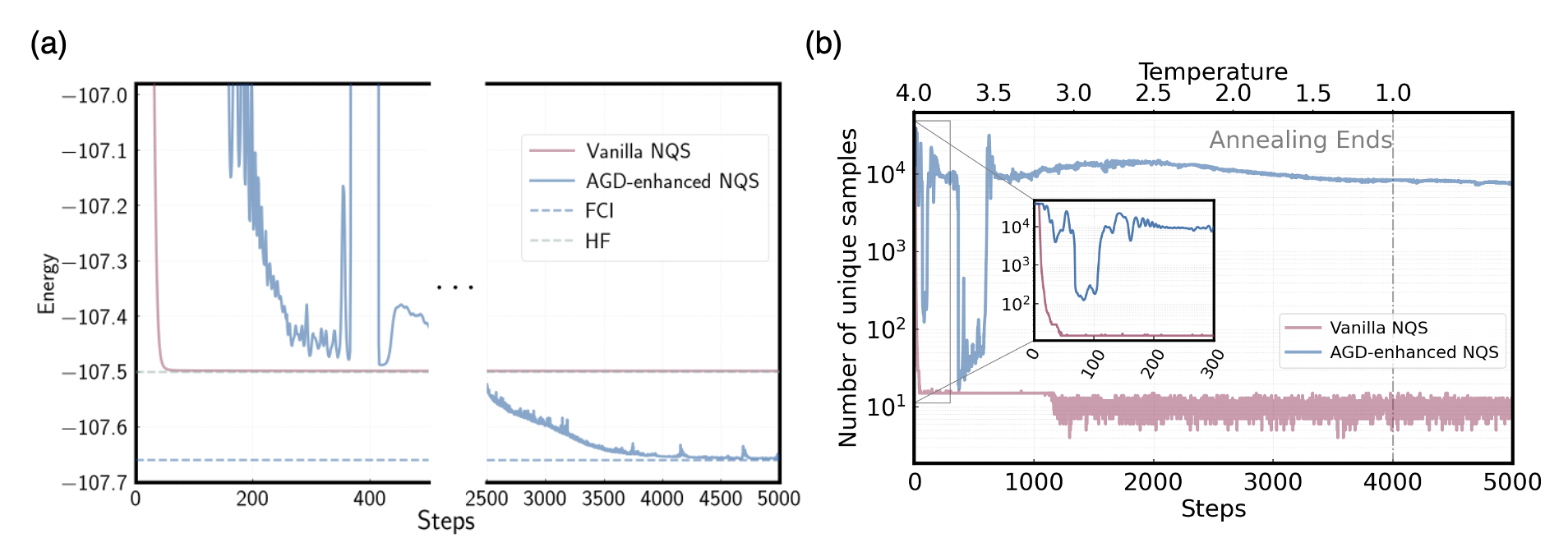}
  \caption{Annealing from a higher initial temperature enables convergence to the ground state for $N_2$. 
  (a) Optimization curve of vanilla NQS and AGD-enhanced NQS.
  (b) Number of unique sampled configurations during optimization for vanilla NQS and AGD-enhanced NQS, together with the annealing temperature schedule shown on the upper axis. }
  \label{fig:unique_samples}
\end{figure}

Moreover, we examine whether AGD-enhanced NQS avoids the sample collapse. As shown in Fig.~\ref{fig:unique_samples}(a), the vanilla NQS rapidly lowers the variational energy during the early stage of optimization, but soon becomes trapped near the Hartree-Fock state, which is accompanied by a sharp collapse of the sampled configuration space, with the number of unique samples remaining only at $O(10)$ in Fig.~\ref{fig:unique_samples}(b). Such a sharp reduction in sampling diversity explains why once important configurations are no longer sampled, the optimizer receives little information about directions that could improve the wave function.
In contrast, the AGD-enhanced NQS exhibits stronger fluctuations at the beginning of training, with high tempreture, which prevents the probability distribution from concentrating too early on a small subset of configurations, reflecting a more exploratory search over active configuration space, with approximately $O(10^4)$ unique samples throughout most of training. 
As the temperature is gradually annealed, the sampling distribution becomes more focused, allowing the model to refine the energy while preserving enough diversity to avoid subspace trapping. 
After the annealing stage ends, the number of unique samples decreases moderately but remains orders of magnitude larger than that of the vanilla NQS. 
Together, these results show that AGD improves NQS training by separating the optimization into an exploration-dominated stage followed by an annealed energy-minimization stage, leading to both better sampling coverage and improved convergence toward the ground-state energy.

\subsection{Frustrated spin systems: J1-J2 model}

Here, we evaluates the AGD-enhanced NQS on frustrated $J_1$-$J_2$ spin models, which provide representative benchmarks for strongly correlated lattice models. 
\begin{equation}
H = J_1\sum_{\langle i,j\rangle}^N \bm{s}_i\bm{s}_j + J_2\sum_{\langle \langle i,j\rangle\rangle}^N \bm{s}_i\bm{s}_j,
\end{equation}
where $\bm{s}_i$ represent spin operator acting on lattice site $i$.
To quantify the accuracy of the proposed method and compared against the other approaches, we the relative error $\epsilon$ and V-score \cite{Wu2024variational},
\begin{align}
    \epsilon &= \frac{\lvert E - E_0 \rvert}{\lvert E_0 \rvert},\\
    V_\text{score} &= \frac{N\,\mathrm{Var}(E)}{E^{2}},
   \label{eq:Vscore}
\end{align}
where $E$ and $\mathrm{Var}(E)$ refers to variational energy and the corresponding variance, $E_0$ denotes the ground-state energy, $N$ is the system size.
We first benchmark AGD-enhanced NQS on the one-dimensional $J_1$-$J_2$ Heisenberg model with $N=16$ sites and periodic boundary conditions, scanning the coupling ratio $J_2/J_1$ from $0$ to $0.8$. As shown in Fig.~\ref{fig:J1J2_combined}(a), the relative error $\epsilon$ closely tracks the V-score and remains small across the entire range, demonstrating that the AGD-enhanced NQS accurately captures the ground state in both weakly and moderately frustrated regimes.
In the shaded strongly frustrated regime, the relative error remains below $\sim 10^{-4}$, showing that the AGD-enhanced NQS remains accurate even where competing interactions render the ground-state sign structure nontrivial and the optimization landscape more challenging.
\begin{figure}[t]
  \centering
  \includegraphics[width=.9\linewidth]{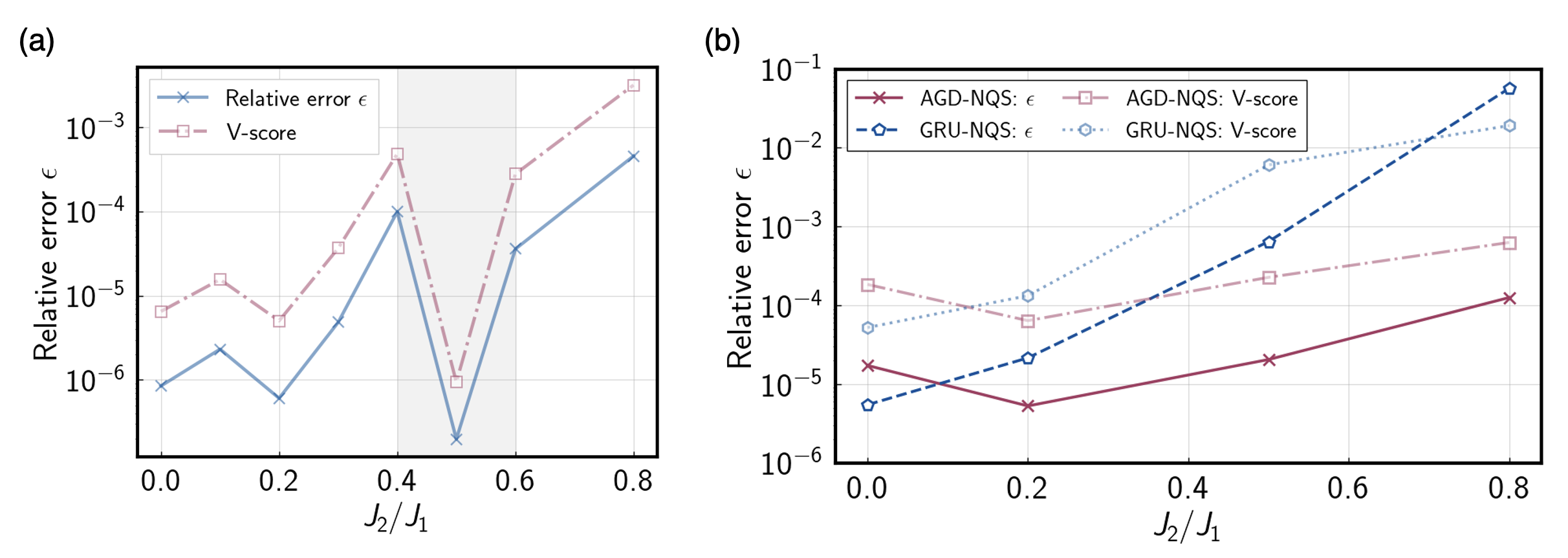}
    \caption{Accuracy of the AGD-enhanced NQS for the $J_1$-$J_2$ model as a function of the coupling ratio $J_2/J_1$. 
    (a) Relative error $\epsilon$ and V-score for the one-dimensional chain with $N = 16$ sites. The shaded region indicates the strongly frustrated regime near the Majumdar-Ghosh point $J_2/J_1 = 0.5$. 
    (b) Relative error $\epsilon$ and V-score for the two-dimensional [$4\times 4$ lattice], comparing AGD-enhanced NQS with GRU-NQS.}
    \label{fig:J1J2_combined}
\end{figure}

We further assess the robustness of AGD-enhanced NQS on the 2D frustrated $J_{1}$-$J_{2}$ Heisenberg model over a broad range of coupling ratios, $J_{2}/J_{1}\in[0,0.8]$. As shown in Fig.~\ref{fig:J1J2_combined}(b), it further highlights the advantage of AGD over the baseline NQS. While both methods attain relatively small errors in the weakly frustrated regime, the baseline deteriorates rapidly as $J_{2}/J_{1}$ increases, reaching a relative error of approximately $10^{-3}$ at $J_{2}/J_{1}=0.5$ and $10^{-2}$ at $J_{2}/J_{1}=0.8$. By contrast, AGD-NQS maintains relative errors of approximately $10^{-5}$ and $10^{-4}$, respectively, together with substantially lower V-scores. These results indicate that the advantage of AGD becomes increasingly pronounced as frustration grows, suggesting that its sampling-aware gradient correction prevents the optimization from concentrating on the subspace and thereby enables a more faithful description of strongly correlated ground states.
Overall, these results demonstrate that AGD-enhanced NQS has promising potential for treating strongly correlated systems beyond molecular Hamiltonians. The method maintains good accuracy in frustrated spin models and provides a meaningful hardness indicator through the V-score.

\section{Discussion}
\label{sec:Discussion}

In this work, we attempt to address a fundamental obstacle in NQS optimization, the tendency of training to converge to spurious local minima long before the true ground-state manifold is reached. 
We first provide theoretical analysis (Theorem~\ref{thm:Hsubspace}) which explain the reason why this occurs.
Because the gradient in NQS optimization is estimated stochastically from sampled configurations, configurations that are statistically rare, but physically important for capturing correlation, can be systematically underweighted during optimization. 
Such sampling bias causes the NQS effectively explore in a subspace which diverges from the actual ground-state manifold, leading it ultimately trapped into the local minima in such subspace. 

To mitigate this problem, we introduce a simple and effective approach to enhanced NQS with annealed optimization strategy designed to prevent configuration collapse. The central idea is to incorporate a temperature-dependent annealing factor into the gradient update, thereby reweighting the contribution of sampled configurations during training. This controlled reweighting encourages the optimizer to continue exploring configurations that may carry essential correlation or sign information but would otherwise be rapidly eliminated from the sampling process.

We further validate our approach through numerical experiments on a range of molecular and strongly correlated many-body systems, where such trapping phenomena can naturally arise during NQS optimization.
With all tested cases, AGD-enhanced NQS both attains chemical accuracy. They also provide direct evidence of the necessity of AGD, without its corrective influence, the optimization is susceptible to becoming trapped in the Hartree-Fock state, failing to capture the essential correlation. Remarkably, we noticed that the enhancement of AGD achieved without increasing the computational overhead, as the it possesses the same computational complexity as standard gradient descent, which allows for the exploitation of large-sample autoregressive sampling to achieve higher accuracy.

More broadly, our findings indicate that improving the optimization dynamics of NQS requires not only more expressive neural architectures, but also training rules that explicitly account for the statistical structure of the sampled Hilbert space. Looking forward, integrating sampling-aware optimization with advanced neural architectures and adaptive sampling schemes may further enhance the robustness and scalability of NQS methods for tackling increasingly complex quantum systems.

\section{Methods}
\label{sec:Methods}

\subsection{Variational Monte Carlo Framework}
We consider the general quantum many-body problem defined by a Hamiltonian $\hat{H}$ acting on a Hilbert space $\mathcal{H}$ with dimension $D = |\mathcal{H}|$. For a lattice system with $N_s$ sites and local Hilbert space dimension $d$, we have $D = d^{N_s}$, which grows exponentially with system size. The ground state $|\Psi_0\rangle$ and its energy $E_0 = \langle\Psi_0|\hat{H}|\Psi_0\rangle/\langle\Psi_0|\Psi_0\rangle$ are determined by the eigenvalue equation $\hat{H}|\Psi_0\rangle = E_0|\Psi_0\rangle$.

According to the variational principle, the ground state energy can be obtained by minimizing the energy functional
\begin{equation}
    E[\psi] = \frac{\langle\psi|\hat{H}|\psi\rangle}{\langle\psi|\psi\rangle}
    \geq E_0,
    \label{eq:variational_principle}
\end{equation}
over a trial wave function manifold. The equality holds if and only if $|\psi\rangle = |\Psi_0\rangle$ (up to a global phase factor). 
In NQS approach, the trial wave function is parametrized by a neural network with real parameters $\bm{\theta} \in \mathbb{R}^{N_p}$:
\begin{equation}
    \psi_{\bm{\theta}}(\sigma) = \langle\sigma|\psi_{\bm{\theta}}\rangle,
\end{equation}
where $\{|\sigma\rangle\}$ denotes a complete basis of the Hilbert space (e.g., the computational basis for spin systems). The wave function can be decomposed into amplitude and phase as \begin{equation}
    \psi_{\bm{\theta}}(\sigma) = \sqrt{p_{\bm{\theta}}(\sigma)}\, e^{i\phi_{\bm{\theta}}(\sigma)},
\end{equation}
where $p_{\bm{\theta}}(\sigma) = |\psi_{\bm{\theta}}(\sigma)|^2 / Z_{\bm{\theta}}$ is the normalized probability distribution ($Z_{\bm{\theta}} = \sum_\sigma |\psi_{\bm{\theta}}(\sigma)|^2$). The neural network predicts the probability $p_{\bm{\theta}}(\sigma)$ for configuration $\sigma$, while the phase $\phi_{\bm{\theta}}(\sigma)$ is generated by an auxiliary network branch.

The variational energy Eq.~\ref{eq:variational_principle} can be rewritten in terms of the local energy
\begin{equation}
    E_{\mathrm{loc}}(\sigma) = \frac{\langle\sigma|\hat{H}|\psi_{\bm{\theta}}\rangle}{\langle\sigma|\psi_{\bm{\theta}}\rangle} = \sum_{\sigma'} H_{\sigma\sigma'} \frac{\psi_{\bm{\theta}}(\sigma')}{\psi_{\bm{\theta}}(\sigma)},
    \label{eq:local_energy}
\end{equation}
as
\begin{equation}
    E(\bm{\theta}) = \sum_{\sigma} p_{\bm{\theta}}(\sigma) E_{\mathrm{loc}}(\sigma) = \mathbb{E}_{p_{\bm{\theta}}}[E_{\mathrm{loc}}].
    \label{eq:energy_expectation}
\end{equation}
Since dimension of the Hilbert space grows exponentially with system size, rendering the exact evaluation of Eq.~\ref{eq:energy_expectation} computationally intractable. The local energy is instead evaluated as the expectation value $\mathbb{E}_{p_{\bm{\theta}}}[E_{\mathrm{loc}}]$, which is approximated via stochastic sampling. Specifically, $N_S$ configurations $\{\sigma^{(k)}\}_{k=1}^{N_S}$ are drawn independently from the neural network's probability distribution $p_{\bm{\theta}}(\sigma)$, yielding the estimator
\begin{equation}
    E(\bm{\theta}) = \frac{1}{N_S} \sum_{k=1}^{N_S} E_{\mathrm{loc}}(\sigma^{(k)}).
    \label{eq:energy_estimator}
\end{equation}
Differentiating the energy estimator in Eq.~\ref{eq:energy_estimator} with respect to the network parameters $\bm{\theta}$ yields the parameter gradient expressed in Eq.~\ref{eq:vmc_gradient}. The parameters are then updated via gradient descent in stand NQS based VMC.

The optimization proceeds iteratively: samples are generated from the current distribution $p_{\bm{\theta}}(\sigma)$, gradients are estimated, and parameters are updated until convergence to a stationary point is achieved. The quality of the final solution depends critically on (i) the expressivity of the ansatz (i.e., whether the parameterized manifold contains a good approximation to $|\Psi_0\rangle$), and (ii) the ability of the sampling-based optimization to navigate the high-dimensional parameter landscape. As discussed in the main text, the latter is often hampered by  sampling-based optimization bottlenecks that we address with the AGD method.

\subsection{cRNN Architecture and Ancestral Sampling}

The trial wave function in our framework is parametrized by a complex-valued recurrent neural network (cRNN). The network architecture closely follows the RNN wave function ansatz introduced by Hibat-Allah \emph{et al.}~\cite{HibatAllah2020}, while the sampling procedure--adapted to molecular systems with a fixed particle number--builds on the autoregressive scheme of Barrett \emph{et al.}~\cite{Barrett2022}. Together, these ingredients endow the ansatz with both expressive power and exact, symmetry-respecting sampling, which are essential for the AGD optimization developed in this work.

\subsubsection{Autoregressive wave function parametrization}

For a configuration $\sigma = (\sigma_1, \sigma_2, \dots, \sigma_N)$ in the computational basis, where $N$ is the number of sites (or spin-orbitals in molecular systems) and $\sigma_i \in \{0, 1, \dots, d-1\}$ denotes the local state at site $i$, the cRNN factorizes the joint probability into a product of conditionals:
\begin{equation}
    p_{\bm{\theta}}(\sigma) = \prod_{i=1}^{N} p_{\bm{\theta}}(\sigma_i \mid \sigma_1, \dots, \sigma_{i-1}).
    \label{eq:autoregressive_factorization}
\end{equation}
Each conditional is produced by a softmax over a learned value function $\mathcal{V}_{\bm{\theta}}$ with complex parameters $\bm{\theta}$,
\begin{equation}
    p_{\bm{\theta}}(\sigma_i \mid \sigma_{<i}) = \frac{\exp\left[-\mathcal{V}_{\bm{\theta}}(\sigma_i, \mathbf{h}_{i-1})\right]}{\sum_{s=0}^{d-1} \exp\left[-\mathcal{V}_{\bm{\theta}}(s, \mathbf{h}_{i-1})\right]},
    \label{eq:conditional_prob}
\end{equation}
where $\mathbf{h}_{i-1}$ is the hidden state encoding the history $\sigma_{<i}$. The hidden state is propagated by a gated recurrent unit (GRU),
\begin{equation}
    \mathbf{h}_i = \mathrm{GRU}_{\bm{\theta}}(\sigma_i, \mathbf{h}_{i-1}),
    \label{eq:gru_update}
\end{equation}
in direct analogy with the construction of Ref.~\cite{HibatAllah2020}.

The phase of the wave function is generated by an independent linear head applied to the same sequence of hidden states and accumulated across sites,
\begin{equation}
    \phi_{\bm{\theta}}(\sigma) = \sum_{i=1}^{N} \mathrm{softsign}\left[\mathbf{W}_{\mathrm{phase}} \cdot \mathbf{h}_{i-1} + b_{\mathrm{phase}}\right],
    \label{eq:phase_network}
\end{equation}
with the softsign activation $\mathrm{softsign}(x) = \pi x / (1 + |x|)$ confining each contribution to $(-\pi, \pi)$. Combining amplitude and phase yields the full complex wave function
\begin{equation}
    \psi_{\bm{\theta}}(\sigma) = \sqrt{p_{\bm{\theta}}(\sigma)} \, e^{i\phi_{\bm{\theta}}(\sigma)}.
    \label{eq:psi_amplitude_phase}
\end{equation}

\subsubsection{Sampling with physical constraints}

A key benefit of the autoregressive form Eq.~\ref{eq:autoregressive_factorization} is that it admits exact ancestral sampling: configurations are drawn site by site from the conditional distributions of Eq.~\ref{eq:conditional_prob}, producing independent and identically distributed (i.i.d.) samples and thereby avoiding the autocorrelation overhead inherent to Markov-chain Monte Carlo schemes.

For molecular systems with a fixed particle number $N_{\mathrm{part}}$, naive ancestral sampling does not respect the global $U(1)$ symmetry of the Hamiltonian. To enforce particle-number conservation exactly, we employ the constrained-sampling scheme of Hibat-Allah \textit{et al.}~\cite{HibatAllah2020}, wherein the conditional distributions are masked on the fly such that only configurations within the physical sector are generated. Specifically, at step $t$, given the running occupation $n_t = \sum_{s=1}^{t}\sigma_s$, a candidate occupation $\sigma_t = s$ is rejected whenever either (i) $n_t \geq N_{\mathrm{part}}$, in which case the particle budget is exhausted, or (ii) $n_t + (N-t) < N_{\mathrm{part}}$, in which case the remaining sites are insufficient to accommodate the missing particles. The surviving probabilities are subsequently renormalized to ensure that each conditional remains a properly normalized distribution. This procedure is exact in the sense that every sampled configuration carries the prescribed particle number, while the independent and identically distributed nature of ancestral sampling is fully preserved. To efficiently exploit modern hardware, we further adopt the autoregressive sampling algorithm of Barrett \textit{et al.}~\cite{Barrett2022}, which is particularly well suited for large-batch generation.

\begin{acknowledgments}
This work is supported by 
Quantum Science and Technology-National Science and Technology Major Project (2023ZD0300200), 
the National Natural Science Foundation of China Grant (No.~12361161602), 
NSAF (Grant No.~U2330201), 
Beijing Natural Science Foundation Z250004, 
Beijing Science and Technology Planning Project (Grant No.~Z25110100810000). The numerical experiments of this work are supported by the High-performance Computing Platform of the Institute of High Energy Physics (Beijing) and Peking University.
\end{acknowledgments}

\bibliographystyle{unsrt}
\bibliography{mybib.bib}

\newpage
\appendix
\section{Related works on solving training issues of NQS}
Although neural quantum states provide expressive variational representations, their optimization remains intrinsically vulnerable to sampling-induced failures. 
In sampling-based training, the variational energy is evaluated only through configurations visited by the current model distribution. 
As a result, regions of configuration space that are rarely sampled receive little or no gradient, even if they contain physically important contributions. 
This creates a problematic feedback loop: configurations that are sampled frequently and appear to lower the energy are preferentially reinforced, whereas configurations with an incorrectly learned sign structure are assigned progressively lower probabilities and may eventually become undersampled. 
Once the probabilities of these configurations are severely underestimated and fall far below the sampling resolution, they may remain unsampled over many iterations, rendering them effectively invisible to both local-energy estimates and gradient updates. This failure can manifest as mode collapse, exponentially suppressed configurations, and spurious modes.
Mode collapse refers to a degenerative phenomenon where the neural network loses its ability to represent the true quantum state, converging instead into trivial, unphysical, or highly constrained states, exponentially suppressed configurations describe individual configurations whose probability is underestimated by an exponential factor even when the global distribution appears accurate, and spurious modes correspond to artificial high-probability regions created by the variational model in poorly sampled parts of configuration space. 
In all cases, the training dynamics may appear stable according to standard diagnostics, while the learned state only represents a restricted portion of the relevant Hilbert or configuration space.

Several works have clarified these pathologies in closely related variational neural sampling problems. Variational autoregressive networks showed that minimizing a reverse-KL or variational-free-energy objective can suffer from mode collapse. The model may correctly reproduce the relative Boltzmann weights within the modes it covers, while completely missing other modes \cite{wu2019solving}. 
A key implication is that low variance of the energy-like training signal is not a sufficient certificate of global correctness, because the zero-variance condition only constrains the support already covered by the model. 
This observation explains why a variational distribution can appear well optimized while still yielding an incorrect free energy or missing symmetry-related sectors. Temperature annealing and explicit symmetry restoration were proposed as practical remedies, allowing the model to start from a high-temperature distribution with broad support and then follow the target distribution as the temperature is lowered \cite{wu2019solving}. 
Variational neural annealing further systematized this idea by introducing entropy-driven classical annealing and driver-Hamiltonian-based quantum annealing within an autoregressive variational framework, showing that annealed training can substantially outperform direct optimization on several spin-glass and optimization problems \cite{hibat2021variational}.

At the same time, subsequent studies revealed important limitations of annealing and neural sampling. In the Newman-Moore model, variational neural annealing was found to locate low-energy states while still collapsing onto only a subset of degenerate ground-state configurations, indicating that correct energy alone does not guarantee correct representation of the full ground-state manifold \cite{inack2022neural}. 
More broadly, machine-learning assisted Monte Carlo was shown to fail on computationally hard sampling problems such as random graph coloring: variational training tends to collapse onto a small region of phase space, whereas data-driven training can suffer from expressivity and overfitting limitations that destroy global MCMC acceptance \cite{ciarella2023machine}. 
These results suggest that difficult configuration-space geometry can be transferred into the trainability of the neural sampler. In such regimes, the issue is not merely whether the neural network is expressive enough in principle, but whether the training dynamics can maintain access to all relevant modes before they are lost.

Other works have focused on correcting the sampling process itself. The concept of exponentially suppressed configurations was introduced to show that even a well-trained neural sampler can severely underestimate a small number of configurations because the reverse-KL objective penalizes such errors only logarithmically \cite{wu2021unbiased}. 
These rare configurations may have negligible impact on the variational objective, but can cause severe failures in importance sampling or neural global-update schemes by producing large weight fluctuations and long autocorrelation times. 
Neural cluster updates, combined with symmetry operations, were therefore proposed to restore unbiased Monte Carlo sampling and help Markov chains escape such suppressed configurations \cite{wu2021unbiased}. 
In variational Monte Carlo for NQS, spurious modes provide a complementary failure mode: under-sampled regions may develop artificial high-probability density, leading to sudden and persistent energy spikes when walkers enter these regions \cite{zhang2023understanding}. Collective-variable VMC addresses this by adding a physically motivated penalty term that suppresses probability accumulation outside the relevant region of a chosen collective variable, thereby eliminating spurious energy spikes and improving energy accuracy \cite{zhang2023understanding}.

These prior approaches identify and mitigate important aspects of neural-sampling failure, but they act at different levels from annealed gradient descent. Temperature annealing and variational neural annealing modify the training path by introducing thermal entropy or quantum-driving terms. 
Neural cluster updates and symmetry operations modify the sampling dynamics to preserve unbiasedness and reduce autocorrelation, collective-variable VMC introduces an additional diagnostic and penalty based on predefined physical coordinates, and machine-learning-assisted Monte Carlo studies emphasize the intrinsic difficulty of representing and sampling hard multimodal distributions. 
AGD instead targets the gradient estimator used in NQS optimization itself. 
It does not require replacing the sampler, introducing a new physical collective variable, enforcing a particular symmetry, changing the Hamiltonian, or solving an additional curvature problem. 
By applying a temperature-dependent reweighting directly to sampled gradient contributions, AGD counteracts the premature dominance of high-probability configurations and amplifies the influence of low-probability configurations before they become statistically invisible. 
In this sense, AGD is a gradient-level correction to sampling-induced bias, which is designed not merely to improve sample quality after collapse has occurred, but to prevent the optimization trajectory from collapsing into an artificially restricted subspace in the first place.

\section{Mapping molecular ecectronic Hamiltonian to qubit Hamiltonian}
Within the Born-Oppenheimer approximation, the nuclei are clamped at
fixed positions $\{\mathbf{R}_I\}$ and the nonrelativistic electronic
Hamiltonian of a molecule with $N_e$ electrons reads, in atomic units,
\begin{equation}
  \hat{H}_{\mathrm{el}}
  = -\frac{1}{2}\sum_{i=1}^{N_e}\nabla_i^{2}
    -\sum_{i=1}^{N_e}\sum_{I}\frac{Z_I}{\lvert \mathbf{r}_i-\mathbf{R}_I\rvert}
    +\frac{1}{2}\sum_{i\neq j}\frac{1}{\lvert \mathbf{r}_i-\mathbf{r}_j\rvert},
  \label{eq:first_quantized}
\end{equation}
where $Z_I$ denotes the charge of nucleus $I$; the constant
internuclear repulsion is omitted as it merely shifts the spectrum.
To obtain a finite-dimensional representation amenable to quantum
simulation, the electronic degrees of freedom are projected onto a
finite set of $N$ orthonormal spin orbitals
$\{\phi_p(\mathbf{x})\}_{p=1}^{N}$, with
$\mathbf{x}=(\mathbf{r},\sigma)$ a combined spatial-spin coordinate.
These orbitals are typically obtained from a preceding self-consistent
Hartree-Fock calculation in an atom-centered Gaussian basis set
(e.g., STO-3G or cc-pVDZ)~\cite{Helgaker2000}. In this basis,
Eq.~\ref{eq:first_quantized} takes the second-quantized form
\begin{equation}
  \hat{H}
  = \sum_{p,q=1}^{N} h_{pq}\,\hat{a}^{\dagger}_{p}\hat{a}_{q}
  + \frac{1}{2}\sum_{p,q,r,s=1}^{N} h_{pqrs}\,
    \hat{a}^{\dagger}_{p}\hat{a}^{\dagger}_{q}\hat{a}_{r}\hat{a}_{s},
  \label{eq:second_quantized}
\end{equation}
where $\hat{a}^{\dagger}_{p}$ ($\hat{a}_{p}$) creates (annihilates) an
electron in spin orbital $\phi_p$. The Hamiltonian \eqref{eq:second_quantized} acts on the Fock space spanned by occupation-number states $\lvert x_{N-1},\dots,x_{1},x_{0}\rangle$ with $x_{p}\in\{0,1\}$, whose dimension $2^{N}$ motivates the identification of each fermionic mode with one qubit.

Qubit operators are distinguishable and commute on different sites, whereas the fermionic operators in Eq.~\ref{eq:second_quantized} anticommute; a faithful mapping must therefore encode the antisymmetric exchange statistics into nonlocal strings of Pauli operators. The canonical choice is the Jordan-Wigner (JW) transformation~\cite{JordanWigner1928}, which stores the occupation of mode $p$ locally in qubit $p$ and attaches a parity string acting on all preceding qubits,
\begin{equation}
  \hat{a}^{\dagger}_{p}
  \;\longmapsto\;
  \left(\prod_{q<p}\sigma^{z}_{q}\right)\sigma^{-}_{p},
  \qquad
  \hat{a}_{p}
  \;\longmapsto\;
  \left(\prod_{q<p}\sigma^{z}_{q}\right)\sigma^{+}_{p},
  \label{eq:JW}
\end{equation}
with $\sigma^{\pm}_{p}=(\sigma^{x}_{p}\mp i\sigma^{y}_{p})/2$. Direct substitution of Eq.~\ref{eq:JW} into Eq.~\ref{eq:second_quantized} and normal ordering of the resulting spin operators yields a qubit Hamiltonian in the general form of a weighted sum of Pauli strings,
\begin{equation}
  \hat{H}_{Q}
  = \sum_{j} h_{j}\,\hat{P}_{j},
  \qquad
  \hat{P}_{j}
  = \bigotimes_{p=0}^{N-1}\sigma^{\nu_{j,p}}_{p},
  \quad
  \nu_{j,p}\in\{\mathbb{I},x,y,z\},
  \label{eq:pauli_sum}
\end{equation}
where the real coefficients $h_{j}$ are linear combinations of the molecular integrals. Because the two-body term in
Eq.~\ref{eq:second_quantized} carries four orbital indices, the number of Pauli strings scales as $\mathcal{O}(N^{4})$, while the JW strings themselves have weight (number of nontrivial Pauli factors) up to $\mathcal{O}(N)$. The qubit Hamiltonian of Eq.~\ref{eq:pauli_sum} defines the computational basis and the operator whose expectation value is minimized in classical variational approaches based on neural-network quantum states. 

\section{Experiment details}

\label{Appendix:Supplementary Information}
This section provides additional details supporting the results presented in the main text. Our wavefunction ansatz is a complex-valued recurrent neural network~\cite{HibatAllah2020} in which the amplitude and phase of the wavefunction share a common recurrent backbone. The backbone consists of two stacked GRU layers with an equal number of hidden units per layer , which process the qubit (spin) configuration autoregressively, one site at a time. At each site, the hidden state of the final GRU layer is passed to two separate linear output heads: an amplitude head, which produces the conditional probability distribution over the local two-dimensional Hilbert space via a softmax activation, and a phase head, which is  composed  of a single linear layer followed by a softsign activation, which outputs the local contribution to the phase.
For the sampling procedure, we adopt the autoregressive sampling scheme proposed by Barrett \emph{et al.}~\cite{Barrett2022}, which efficiently samples physically distinct configurations together with their corresponding occurrence counts. Throughout all experiments in this work that employ AGD method, the total number of autoregressive samples is fixed at $1\times10^{12}$. To reduce the computational cost, the local energies are evaluated using the scalable approach proposed in Ref.~\cite{Zhao2023Scalable}.
The complete optimization workflow and pseudocode for the annealed gradient descent procedure are outlined in Algorithm~\ref{alg:agd}. 

To evaluate the performance of neural-network quantum states, we benchmark our approach on a diverse set of molecular Hamiltonians. The Hamiltonians listed in Table~\ref{tab:nqs_crnn_molecules} are taken directly from the supporting code repository associated with Ref.~\cite{Barrett2022}, where they were generated using the following workflow. Molecular geometries were obtained from the PubChem database~\cite{Kim2016pubchem}, and the corresponding electronic-structure calculations were performed in the STO-3G minimal basis using \textsc{Psi4}~\cite{Smith2020psi4}. The resulting molecular integrals were then processed with \textsc{OpenFermion}~\cite{McClean2020openfermion} to construct the second-quantized electronic Hamiltonians, which were subsequently mapped onto qubit Hamiltonians via the Jordan-Wigner transformation. The initial temperature parameters and the number of neurons per layer used to calculate these Hamiltonians using AGD are listed in Table ~\ref{tab:nqs_crnn_molecules}. Except for molecule \ce{LiCl}, which has an annealing step count of 7k steps, the other molecules have an annealing step count of 4k steps. Besides, the results reported in Table~\ref{tab:nqs_crnn_molecules} are obtained with $10^4$ optimization steps for all molecules, except for \ce{LiCl} ($1.5\times10^4$ steps) and \ce{C2} ($2\times10^4$ steps).
To further assess our method, we apply it to the linear hydrogen chain \ce{H10} with an interatomic separation of $2.0$~Bohr in the STO-3G basis, using canonical (restricted Hartree-Fock) molecular orbitals; the hydrogen chain is a prototypical benchmark system for correlated electronic-structure methods, whose stretched geometries probe the strongly correlated regime. The results shown in Fig.~\ref{fig:mol_results}(b) are obtained with a network of 230 hidden units per layer (481{,}164 parameters in total) at a initial temperature of $T = 10.0$ with decay steps 15k and total steps $1.5\times 10^5$.
In all molecular experiments, all network parameters are optimized using the Adam optimizer with decay rates $(\beta_1, \beta_2) = (0.9, 0.99)$ and $\epsilon = 10^{-8}$. We employ parameter-group-dependent learning rates: the parameters of the phase subnetwork are updated with a learning rate of $10\eta$, while the amplitude subnetwork and all remaining parameters use the base learning rate $\eta = 1\times 10^{-3}$. All runs listed in Table ~\ref{tab:nqs_crnn_molecules} are performed with a fixed random seed ($\mathrm{seed} = 111$), and only the particle-number (electron-number) conservation constraint is imposed on the sampled configurations. 

Having established the performance of our approach on molecular Hamiltonians, we next turn to frustrated quantum spin systems and benchmark it on the $J_1$-$J_2$ Heisenberg model in both one and two dimensions, scanning the frustration ratio $J_2/J_1$ across the weakly and strongly frustrated regimes. The hyperparameters used for each value of $J_2/J_1$ are summarized in Table~\ref{tab:nqs_crnn_j1j2}, including the number of hidden units per layer, the base learning rate $\eta$, the initial temperature $T$, the temperature-decay step count, and the total number of optimization steps. For the one-dimensional chain, the weakly frustrated points $J_2/J_1 \in \{0.0, 0.1, 0.2, 0.3, 0.4, 0.5\}$ are obtained with $10^5$ optimization steps using 100 hidden units per layer, a base learning rate of $\eta = 1\times10^{-3}$, an initial temperature of $T = 4.0$, and a decay step count of 8k. In the more strongly frustrated regime, the point $J_2/J_1 = 0.6$ uses the same network and annealing schedule but requires $3\times10^5$ optimization steps, while $J_2/J_1 = 0.8$ additionally employs a larger network of 150 hidden units per layer, a reduced base learning rate of $\eta = 2.5\times10^{-4}$, an initial temperature of $T = 5.0$, and a decay step count of 10k. 

For our AGD-enhanced NQS experiment of two-dimensional square lattice, all points are computed with $3\times10^5$ optimization steps, a base learning rate of $\eta = 2.5\times10^{-4}$, and a decay step count of 30k. The points $J_2/J_1 \in \{0.0, 0.2\}$ use 200 hidden units per layer with an initial temperature of $T = 6.0$; the point $J_2/J_1 = 0.5$, lying in the maximally frustrated region, uses 200 hidden units with $T = 4.0$; and the point $J_2/J_1 = 0.8$ uses an enlarged network of 250 hidden units per layer with $T = 6.0$. In all spin-model experiments, we adopt the same optimization protocol as in the molecular calculations described above. All $J_1$-$J_2$ Heisenberg model runs are performed with a fixed random seed ($\mathrm{seed} = 111$). 
The GRU-NQS data points shown in Fig.~\ref{fig:J1J2_combined} are obtained directly from the $J_1$--$J_2$ implementation provided in the public GitHub repository accompanying the work of Hibat-Allah \emph{et al.}~\cite{HibatAllah2020}. Each data point represents the average over three independent runs with different random seeds, where each run consists of $3\times10^{5}$ optimization steps. We adopt the default initial learning rate of $2.5\times10^{-4}$ and a sample size of $500$, together with the learning-rate decay schedule implemented in the original code, $\eta_{\mathrm{decayed}} = 1/\left(1/\eta + t/10\right)$, where $\eta$ denotes the initial learning rate and $t$ the optimization step. For a fair comparison, the model size and the number of hidden layers of the GRU-NQS are set to be identical to those listed in Table~\ref{tab:nqs_crnn_j1j2}.

Finally, we describe the learning-rate schedules employed during the variational optimization. For the molecular systems, as well as for the one-dimensional $J_1$-$J_2$ model with $J_2/J_1 \le 0.6$, we adopt a two-stage schedule: starting from an initial learning rate $\eta_0 = 1\times10^{-3}$, the learning rate is first annealed following a cosine schedule over the initial $T_{\mathrm{decay}} = 3000$ optimization steps,
\begin{equation}
  \eta(t) = \eta_{\min}
  + \tfrac{1}{2}\left(\eta_0 - \eta_{\min}\right)
  \left[1 + \cos\!\left(\pi t / T_{\mathrm{decay}}\right)\right],
  \qquad t \le T_{\mathrm{decay}},
\end{equation}
with $\eta_{\min} = 5\times10^{-4}$, after which the learning rate is held constant at $\eta_{\min}$ for the remainder of the training. For the more strongly frustrated regime of the one-dimensional model ($J_2/J_1 = 0.8$) and for all two-dimensional $J_1$-$J_2$ calculations, where the optimization landscape is considerably more challenging, we instead employ a smooth inverse-time decay,
\begin{equation}
  \eta(t) = \frac{\eta_0}{1 + \alpha\,\eta_0\, t},
\end{equation}
with $\eta_0 = 2.5\times10^{-4}$ and $\alpha = 0.1$, which provides a slowly and monotonically decreasing learning rate over the full course of the optimization and thereby suppresses stochastic fluctuations in the late stage of the training.

\begin{table}[htbp]
  \centering
  \caption{Hyperparameters used for the one- and two-dimensional
  $J_1$-$J_2$ Heisenberg model calculations: the frustration ratio
  $J_2/J_1$, the number of hidden units per layer, the base learning rate
  $\eta$, the initial temperature $T$, the temperature-decay step count, and
  the total number of optimization steps.}
  \label{tab:nqs_crnn_j1j2}
  \begin{tabular}{ccccccc}
    \hline\hline
    System & $J_2/J_1$ & Hidden units & $\eta$ & $T$ & Decay steps & Total steps \\
    \hline
    1D & $0.0$-$0.5$ & 100 & $1\times10^{-3}$   & $4.0$ & $8\mathrm{k}$  & $1\times10^{5}$ \\
    1D & $0.6$        & 100 & $1\times10^{-3}$   & $4.0$ & $8\mathrm{k}$  & $3\times10^{5}$ \\
    1D & $0.8$        & 150 & $2.5\times10^{-4}$ & $5.0$ & $10\mathrm{k}$ & $3\times10^{5}$ \\
    \hline
    2D & $0.0$        & 200 & $2.5\times10^{-4}$ & $6.0$ & $30\mathrm{k}$ & $3\times10^{5}$ \\
    2D & $0.2$        & 200 & $2.5\times10^{-4}$ & $6.0$ & $30\mathrm{k}$ & $3\times10^{5}$ \\
    2D & $0.5$        & 200 & $2.5\times10^{-4}$ & $4.0$ & $30\mathrm{k}$ & $3\times10^{5}$ \\
    2D & $0.8$        & 250 & $2.5\times10^{-4}$ & $6.0$ & $30\mathrm{k}$ & $3\times10^{5}$ \\
    \hline\hline
  \end{tabular}
\end{table}

\begin{algorithm}[H]
\nolinenumbers
\caption{Annealed Gradient Descent for NQS}
\label{alg:agd}
\SetKwData{Left}{left}\SetKwData{This}{this}\SetKwData{Up}{up}
\SetKwFunction{Union}{Union}\SetKwFunction{FindCompress}{FindCompress}
\SetKwInOut{Input}{input}\SetKwInOut{Output}{output}
\KwIn{Hamiltonian $\hat{H}$, $T_{\mathrm{init}}$, $T_{\mathrm{decay}}$, $N_{\mathrm{steps}}$, $N_S$}
\KwOut{Optimized $\bm{\theta}$}
Initialize $\bm{\theta}$\;
\For{$t = 0$ to $N_{\mathrm{steps}}-1$}
{
    Sample $\{\sigma_i\}_{i=1}^{N_S} \sim p_{\bm{\theta}}(\sigma)$\;
    Compute $E_{\mathrm{loc}}(\sigma_i) = \langle\sigma_i|\hat{H}|\psi_{\bm{\theta}}\rangle/\langle\sigma_i|\psi_{\bm{\theta}}\rangle$\;
    $T(t) \leftarrow T_{\mathrm{init}} - (T_{\mathrm{init}}-1)\cdot\min(1, t/T_{\mathrm{decay}})$\;
    $w_i \propto \exp\left[\frac{2}{T(t)}\log|\psi_{\bm{\theta}}(\sigma_i)|\right]$\;
    $E_{\mathrm{baseline}} \leftarrow \sum_i w_i E_{\mathrm{loc}}(\sigma_i)$\;
    $\mathcal{L} \leftarrow 2\,\mathrm{Re}\left[ \sum_i w_i \left( E_{\mathrm{loc}}(\sigma_i) - E_{\mathrm{baseline}} \right) \log\psi_{\bm{\theta}}^*(\sigma_i) \right]$;
    $\bm{\theta} \leftarrow \mathrm{Adam}(\bm{\theta}, \nabla_{\bm{\theta}}\mathcal{L})$\;
}
\Return{$\bm{\theta}$}\;
\end{algorithm}

\section{Gradient Expression}
\label{sec:gradient}
To understand the specific meaning of the gradient formula, we can use the expansion method to see it clearly. This expansion approximation reveals how the probability current generated by the neural network flows. Let $\mathcal{L}=\langle H\rangle_{\psi_{\bm{\theta}}}$ be variational energy, $\Delta E(\sigma)= E_{loc}(\sigma) - \langle E_{loc}\rangle $ be the deviation of the local energy and $\hat{\grads}$ be the annealed gradient. Instead of estimating the gradients via sampling configurations according to the $p_{\bm{\theta}}=|\psi_{\bm{\theta}}|^2$, we replace $p_{\bm{\theta}}$ with $q_{\bm{\theta}}(T) = p_{\bm{\theta}}^{\frac{1}{T}}$ and consider $T=1+\epsilon$ where $\epsilon \ll 1$. Then, we have

\begin{align}
    \partial_{{\bm{\theta}}}\mathcal{L}_{T}(\bm{\theta})
    &\propto 2\operatorname{Re} \left\{ \sum_\sigma \left[ p_{\bm{\theta}}(\sigma) \right]^{\frac{1}{1+\epsilon}} \Delta E(\sigma)\partial_{\bm{\theta}}\log \psi^*_{\bm{\theta}}(\sigma) \right\}  
    \label{eq:modified gradient Line1}
    \\
    &\simeq 2\operatorname{Re}\left\{\sum_\sigma p_{\bm{\theta}}(\sigma) \bigl( 1 - \epsilon \log p_{\bm{\theta}}(\sigma) \bigr) \Delta E(\sigma) \partial_{\bm{\theta}}\log \psi^*_{\bm{\theta}}(\sigma) \right\} 
    \label{eq:modified gradient Line2}
    \\
    &= \underbrace{2\operatorname{Re}\left\{ \sum_\sigma \left[ p_{\bm{\theta}}(\sigma) \right]  \Delta E(\sigma)\partial_{\bm{\theta}}\log \psi^*_{\bm{\theta}}(\sigma)\right\}}_{\partial_{\bm{\theta}}\mathcal{L}(\bm{\theta})} + 
    2\operatorname{Re} \left\{\epsilon\sum_\sigma \left[ -p_{\bm{\theta}}(\sigma) \log p_{\bm{\theta}}(\sigma) \right]  \Delta E(\sigma)\partial_{\bm{\theta}}\log \psi^*_{\bm{\theta}}(\sigma)\right\} \label{eq:modified gradient Line3},  
\end{align}
where Eq.~\ref{eq:modified gradient Line2} follows from a first-order Taylor expansion of $q_{\bm{\theta}}$ in $\epsilon$,
\begin{equation}
    p^{\frac{1}{1+\epsilon}}
    = p\left[1-\epsilon\log p+\epsilon^2\left(\log p+\frac{(\log p)^2}{2}\right)+\mathcal{O}(\epsilon^3)\right],
    \label{equ:Taylor series}
\end{equation}
Thus, we define $\partial_{\bm{\theta}} \mathcal{K}({\bm{\theta}})$ as a general form of the each term in Eq.\ref{eq:modified gradient Line3},
\begin{equation}
    \partial_{\bm{\theta}} \mathcal{K}({\bm{\theta}})= 2\operatorname{Re}\left\{ \sum_\sigma w(p_{\bm{\theta}}(\sigma))\Delta E(\sigma)\partial_{\bm{\theta}}\log \psi^*_{\bm{\theta}}(\sigma)\right\}.
    \label{equ:update_para}
\end{equation}
Thus, for instance, the vanilla gradient $\partial_{\bm{\theta}}\mathcal{L}(\bm{\theta})$ and the second term in Eq.~\ref{eq:modified gradient Line3} can be naturally written as $\partial_{\bm{\theta}} \mathcal{K}({\bm{\theta}})$ with $w(p)=p$ and $w(p)\equiv p\log{\frac{1}{p}}=-p\log p$, respectively.

We then consider the contribution with respect to each terms in Eq.~\ref{eq:modified gradient Line3} for the change of $\bm{\theta}$,
\begin{align}
\Delta \bm{\theta} &= -\eta\partial_{\bm{\theta}} \mathcal{K}(\bm{\theta}),\\
&= - 2\eta \operatorname{Re}\left\{ \sum_\sigma w(p_{\bm{\theta}}(\sigma))\Delta E(\sigma)\partial_{\bm{\theta}}\log \psi^*_{\bm{\theta}}(\sigma)\right\},\label{eq:delta_theta_2}\\
&= - 2\eta \sum_\sigma w(p_{\bm{\theta}}(\sigma))\Delta E(\sigma)\partial_{\bm{\theta}}\log p_{\bm{\theta}}(\sigma).\label{eq:delta_theta_3}
\end{align}
From Eq.\ref{eq:delta_theta_2} to Eq.\ref{eq:delta_theta_3}, it is based on the fact that $\log\psi^*_{\bm{\theta}}(\sigma)=\log p_{\bm{\theta}} - i\phi_{\bm{\theta}}$ where $\phi_{\bm{\theta}}$ is the phase of NQS.
For an arbitrary single configuration $\sigma_0$, the change of $\log p_{\bm{\theta}}$ can be computed as 
\begin{equation}
    \Delta \log p_{\bm{\theta}} (\sigma_0)\approx \big(\partial_{\bm{\theta}} \log p_{\bm{\theta}}(\sigma_0)\big)^{\top}\Delta{\bm{\theta}}.\label{eq:delta_logp}
\end{equation}
then, substituting Eq.\ref{eq:delta_theta_3} into Eq.\ref{eq:delta_logp}, we have
\begin{equation}
    \Delta \log p_{\bm{\theta}}(\sigma_0) \approx - 2\eta \sum_\sigma w(p_{\bm{\theta}}(\sigma))\Delta E(\sigma)\big(\partial_{\bm{\theta}} \log p_{\bm{\theta}}(\sigma_0)\big)^{\top}\partial_{\bm{\theta}}\log p_{\bm{\theta}}(\sigma).
    \label{eq:appendix_delta_logp}
\end{equation}
When we only consider the contribution of a single configuration, i.e. $\{\sigma\}=\{\sigma_0\}$, then Eq.\ref{eq:appendix_delta_logp} becomes,
\begin{equation}
    -2\eta \cdot w(p_{\bm{\theta}}(\sigma_0))\Delta E(\sigma_0) \big| \partial_{\bm{\theta}} \log p_{\bm{\theta}}(\sigma_0) \big|^2.
\end{equation}

We can observe that when $\Delta E(\sigma_0)<0$ which makes $\Delta \log p_{\bm{\theta}}(\sigma_0)>0$ and leads increase of $\log p_\theta(\sigma)$, thereby increasing the probability to sample $\sigma_0$.
The effect of the annealing factor on the sampling is illustrated in Fig.~\ref{fig:wp}. The orange dashed line corresponds to the vanilla gradient weight, $w(p)=p$, whereas the blue curve represents the second correction term in Eq.~\ref{eq:modified gradient Line3}, with the entropy-like weighting function $w(p)=-p\log p$. In the low-probability region, particularly for $p\in[0,0.1]$, this entropy-like weight amplifies the contribution of configurations $\sigma$ with small model probability $p_{\bm{\theta}}(\sigma)$. As a result, physically important but underrepresented configurations are less likely to be missed during gradient estimation. At the same time, $w(p)=-p\log p$ vanishes as $p\to 1$, so it does not further enhance configurations that already have high probability. This prevents dominant configurations from overwhelming the update and helps the NQS maintain access to low-probability but relevant regions of the Hilbert space.

\begin{figure}[H]
  \centering
  \includegraphics[width=.45\linewidth]{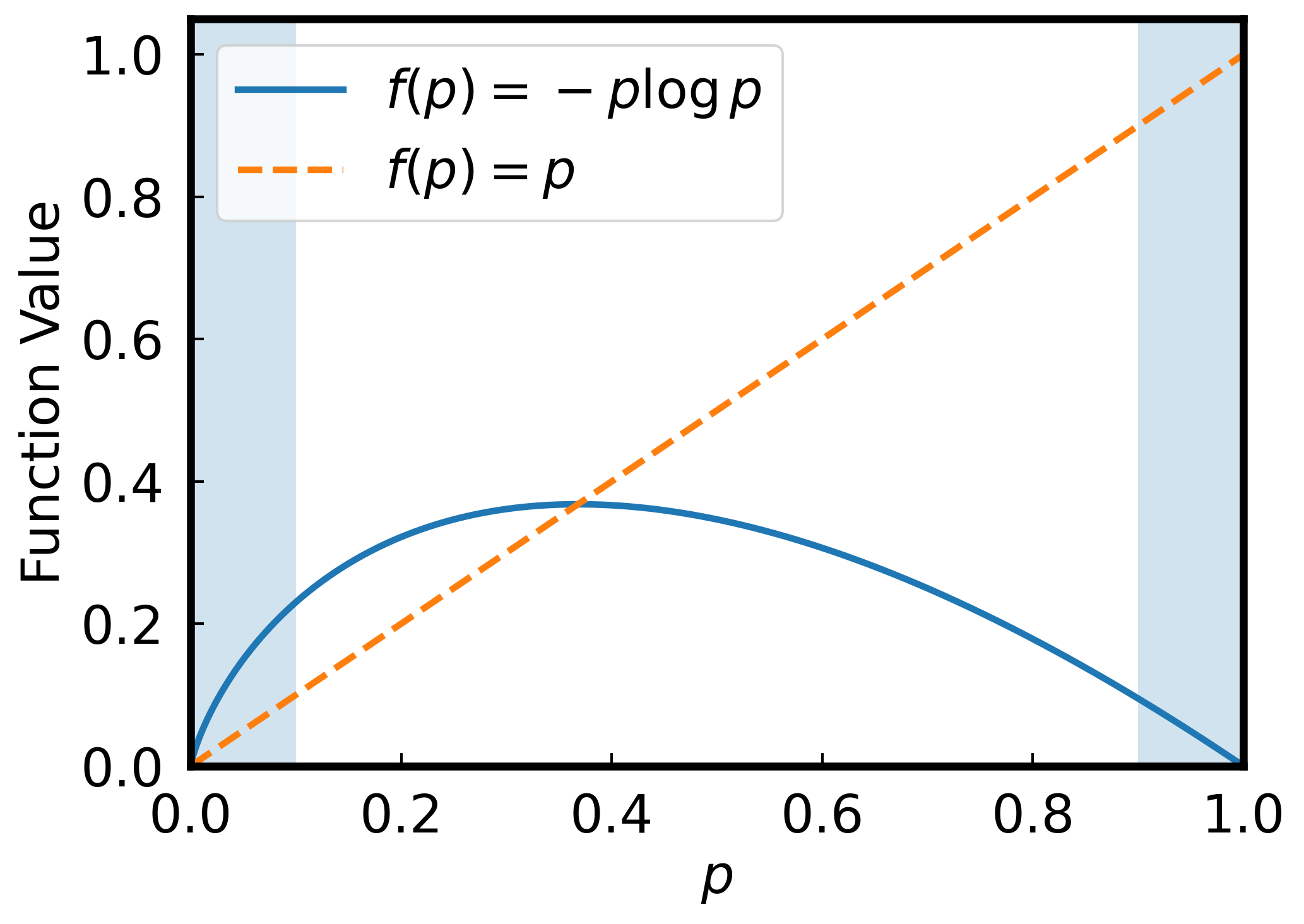}
    \caption{Comparison between the entropy-like weighting function $f(p)=-p\log p$ and the linear weighting function $f(p)=p$ over $p\in[0,1]$.}
    \label{fig:wp}
\end{figure}

Compared to the zeroth-order weight $w(p)=p$, which corresponds to standard gradient descent, the higher-order weight exhibit a logarithmic enhancement in the regime of small $p$. Specifically, as indicated by Eq.~\ref{equ:Taylor series}, the $n$-th order contribution provides an enhancement scaling as $(\log p)^n$ as shown in Fig.\ref{fig:series}.

\begin{figure}[H]
  \centering
    \includegraphics[width=0.6\linewidth]{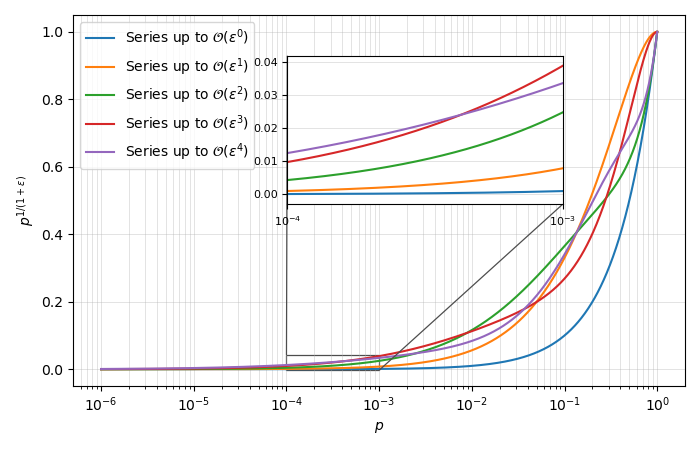}
  \caption{Comparison of \(p^{1/(1+\epsilon)}\) and its Taylor expansions in \(\epsilon\) evaluated at \(\epsilon=1\). 
  Relative to the zeroth-order approximation \(\mathcal{O}(\epsilon^{0})\), the higher-order truncations that incorporate \(\epsilon\)-dependent corrections exhibit a logarithmically enhanced weighting in the small-\(p\) regime.
  }
  \label{fig:series}
\end{figure}

\section{Proof of Theorem~\ref{thm:Hsubspace}}
\label{Appendix:proof of thm:Hsubspace}
When the probability of a configuration $\sigma_0$ is misestimated by the neural network and becomes so strongly suppressed that the optimization fails to sample $\sigma_0$ over many steps, the sampling process can be regarded as effectively unable to reach this configuration. The only remaining channel through which $\sigma_0$ can re-enter the local-energy estimate is then through a sampleable configuration $\sigma$ that is coupled to $\sigma_0$ via the matrix element $H_{\sigma\sigma_0}$. Unfortunately, this contribution is suppressed as well. If this suppression is likewise assumed to be negligibly small, the local-energy estimate becomes entirely independent of $\sigma_0$. We now
make this statement precise and present the proof of Theorem~\ref{thm:Hsubspace}.
\begin{proof}
Let $\mathcal{H}$ denote the full Hilbert space and $\mathcal{S}$ be the truncated subspace excluding the unsampled state $|\sigma_0\rangle$. 
Note that this can occur if a component is suppressed due to the sign problem, rendering it inaccessible to sampling. We define the projection operator into this subspace as $\hat{P} = \hat{I} - |\sigma_0\rangle\langle\sigma_0|$. Under the assumption that $|\sigma_0\rangle$ is never sampled , the variational wave function is strictly confined to $\mathcal{S}$, satisfying $|\psi_\theta\rangle = \hat{P}|\psi_\theta\rangle$.

In the framework of VMC, the empirical local energy for any sampled state $|\sigma\rangle \in \mathcal{S}$ is given by
\begin{equation}
    E_{\mathrm{loc}}(\sigma) = \sum_{\sigma^\prime \neq \sigma_0} H_{\sigma \sigma^\prime} \frac{\psi_\theta(\sigma^\prime)}{\psi_\theta(\sigma)} + H_{\sigma \sigma_0}\frac{\psi_\theta(\sigma_0)}{\psi_\theta(\sigma)}.
    \label{equ:S_local_energy}
\end{equation}
where, since $\sigma_0$ is never sampled, we have $\psi_\theta (\sigma_0) \propto p_\theta(\sigma_0) \approx 0$, allowing us to neglect the second term. Strictly speaking, a near-vanishing function value does not necessarily imply a vanishing derivative. However, as explicitly shown in Eq.~\ref{eq:vmc_gradient}, the gradient is fundamentally governed by the parameter derivatives of $\psi_{\bm{\theta}}$, with $E_{\mathrm{loc}}$ acting merely as a multiplicative weight that does not enter the differentiation process. Furthermore, the gradient estimator is evaluated as a sum over the sampled configurations.
Thus this restricted summation is mathematically equivalent to defining an effective Hamiltonian $\hat{H}_{\mathrm{subspace}} = \hat{P}\hat{H}\hat{P}$. Consequently, for any $\sigma \neq \sigma_0$, the local energy can be rigorously expressed as
\begin{equation}
    E_{\mathrm{loc}}(\sigma) = \frac{\langle\sigma| \hat{H}_{\mathrm{subspace}} |\psi_\theta\rangle}{\langle\sigma|\psi_\theta\rangle}.
    \label{equ:H_eff}
\end{equation}

The objective of gradient-based optimization is to minimize the energy functional $E(\theta) = \langle\psi_\theta| \hat{H}_{\mathrm{subspace}} |\psi_\theta\rangle / \langle\psi_\theta|\psi_\theta\rangle$ within the restricted subspace $\mathcal{S}$. Assuming the NQS possesses sufficient expressivity, convergence to a stationary point dictates that the wave function must satisfy the variational extremum condition:
\begin{equation}
    \hat{H}_{\mathrm{subspace}} |\psi_\theta\rangle = E_{0} |\psi_\theta\rangle.
\end{equation}
This implies that the optimization algorithm invariably drives $|\psi_\theta\rangle$ to become an exact eigenstate of the effective Hamiltonian  $\hat{H}_{\mathrm{subspace}}$.

Substituting this eigenvalue equation back into the definition of the local energy Eq.~\ref{equ:H_eff}  yields
\begin{equation}
    E_{\mathrm{loc}}(\sigma) = \frac{\langle\sigma| (E_0 |\psi_\theta\rangle)}{\langle\sigma|\psi_\theta\rangle} = E_{0} \frac{\langle\sigma|\psi_\theta\rangle}{\langle\sigma|\psi_\theta\rangle} = E_{0}.
\end{equation}
Therefore, for all sampled states $\sigma \in \mathcal{S}$, the residual energy identically vanishes, yielding $\Delta E(\sigma) \equiv E_{\mathrm{loc}}(\sigma) - E \to 0$. Due to Eq.~\ref{eq:appendix_delta_logp}, we can concluded $\Delta \log p_\theta(\sigma_0) \to 0$.
\end{proof}

Following Theorem~\ref{thm:Hsubspace}, $\Delta \log p_\theta(\sigma_0)$ vanishes. This demonstrates that if $p(\sigma_0) \to 0$ due to any mechanism--such as suppression induced by the sign problem--the subsequent updates to $p(\sigma_0)$ will also tend to zero. Consequently, the neural network becomes trapped in a local minimum. However, the aforementioned assumption is highly idealized. In practice, given a sufficiently large sample size, a specific component may still be sampled a handful of times, even under severe suppression induced by the sign problem. If the network manages to dynamically correct the sign structure during training, the local energy deviation $\Delta E(\sigma)$ will not vanish. Instead, this recovery provides crucial gradient information, driving the energy of the NQS significantly lower.

\end{document}